\documentclass[10pt, conference, letterpaper]{IEEEtran}
\IEEEoverridecommandlockouts
\usepackage[cmex10]{amsmath}
\usepackage{cite}
\usepackage{amsmath, amsthm, amssymb}
\usepackage{algorithm}
\usepackage[noend]{algpseudocode}

\usepackage{float}
\usepackage{subfig}
\usepackage{graphicx}
\usepackage{mathtools}

\usepackage{array}
\newcolumntype{C}[1]{>{\centering\arraybackslash}m{#1}}

\newtheorem{theorem}{\textbf{Theorem}}

\newtheorem{lemma}{\textbf{Lemma}}

\newtheorem{eqs}{\textbf{Equation System}}

\newtheorem{corollary}{\textbf{Corollary}}

\theoremstyle{definition}

\newtheorem{problem}{\textbf{Problem}}
\newtheorem*{problem*}{\textbf{Problem}}
\newtheorem{definition}{\textbf{Definition}}
\newtheorem{remark}{\textbf{Remark}}

\let\G\relax

\DeclareMathOperator{\E}{\mathbb{E}}
\DeclareMathOperator{\g}{\text{\c{g}}}
\DeclareMathOperator{\G}{\mathcal{G}}

\DeclareMathOperator{\define}{\coloneqq}

\theoremstyle{definition}
\newtheorem{example}{\textbf{Example}}
\newtheorem{process}{\textbf{Process}}
\def\BibTeX{{\rm B\kern-.05em{\sc i\kern-.025em b}\kern-.08em
    T\kern-.1667em\lower.7ex\hbox{E}\kern-.125emX}}
\begin{document}

\title{An Approximation Algorithm for Active Friending in Online Social Networks}

\author{\IEEEauthorblockN{
Guangmo (Amo) Tong\IEEEauthorrefmark{1},
Ruiqi Wang\IEEEauthorrefmark{1},
Xiang Li\IEEEauthorrefmark{3},
Weili Wu\IEEEauthorrefmark{2}, 
and
Ding-Zhu Du\IEEEauthorrefmark{2}}\\
\IEEEauthorblockA{\IEEEauthorrefmark{1}Department of Computer and Information Sciences, University of Delaware, USA\\}
\IEEEauthorblockA{\IEEEauthorrefmark{2}Dept. of Computer Science, University of Texas at Dallas, USA} 
\IEEEauthorblockA{\IEEEauthorrefmark{3}Department of Computer Engineering, Santa Clara University, USA} 
\IEEEauthorblockA{\{amotong, wangrq\}@udel.edu, xli8@scu.edu,  \{weiliwu, dzdu\}@utdallas.edu}
}

\maketitle

\begin{abstract}
Guiding users to actively expanding their online social circles is one of the primary strategies for enhancing user participation and growing online social networks. In this paper, we study the active friending problem which aims at providing users with the strategy for methodically sending invitations to successfully build a friendship with target users. We consider the prominent linear threshold model for the friending process and formulate the active friending problem as an optimization problem. The key observation is the relationship between the active friending problem and the minimum subset cover problem, based on which we present the first randomized algorithm with a data-independent approximation ratio and a controllable success probability for general graphs. The performance of the proposed algorithm is theoretically analyzed and supported by encouraging simulation results done on extensive datasets.
\end{abstract}

\begin{IEEEkeywords}
online social network, active friending, approximation algorithm
\end{IEEEkeywords}

\section{Introduction}
Due to the expeditious information exchange, the online social network has been heralded as the dominant platform for viral marketing \cite{kempe2003maximizing, nguyen2016stop}, news announcing \cite{lerman2010information}, and daily communication \cite{pempek2009college}. The success of social networks heavily relies on network growth in terms of the number of users, the intimacy of relationships, and the frequency of interactions. The recent decade has witnessed a tremendous expansion of online social network where there are totally 3.03 billion active users by the end of April 2018 \cite{smith2018amazing}. Facebook today has 2.1 billion users while this number was merely 12 million back to 2015. Strategies for network expansion can be classified into two categories: denotative expansion and connotative expansion \cite{yuan2017active}. Denotative expansion enlarges the network scale by attracting new users to create accounts, whereas connotative expansion aims at enhancing the network connectivity by fostering user interaction via methods such as friend recommendation. For example, the People-You-May-know widget is currently available on most online social networks. In this paper, we study the active friending problem which is one of the novel connotative expansion strategies.

Active friending is driven by the scenario when one user wishes for an online friendship with a target user who may be an influential person or a community leader  but not an acquaintance. Different from the traditional friending service which identifies potential contacts such as offline friends or friends of friends, active friending assists users to build point-to-point relationships rather than making selections among a pool of candidates. Even though users are free to send invitations to their target users, an invitation can hardly be accepted without the familiarity between users, especially when the target user is a celebrity who can receive many invitations. Alternatively, one promising method is to gain enough mutual friends with the target user before sending the invitation. The power of mutual friends has been observed for long, and in many social networks, such as Facebook and LinkedIn, common friends are displayed when an invitation is received. The active friending problem considered in this paper is to help user send invitations step-by-step to obtain a sufficient number of mutual friends with the target user and finally be an online friend of the target user.

\begin{figure*}[!t]
\centering
\includegraphics[width=0.8\textwidth]{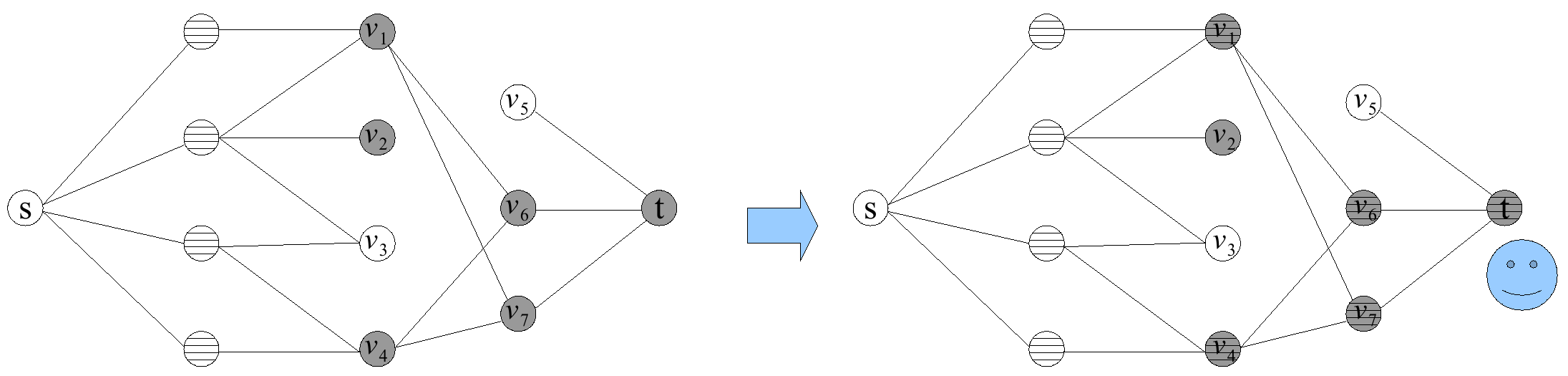}
\caption{An illustrative example of friending process.}
\label{fig: process}
\vspace{-3mm}
\end{figure*}

\textbf{Problem Formulation.} We denote the users $s$ and $t$ as the \textit{initiator} and \textit{target user}, respectively, when $s$ wishes to friend $t$. We adopt the threshold based friending model where the number of mutual friends is the major factor affecting the friending process. Therefore, the key issue is to have sufficient mutual friends before sending an invitation to $t$. To this end, the initiator $s$ has first to attempt to friend with the friends of $t$, which defines the approach recursively. Thus, the problem is to identify a set of intermediate users for $s$ to send invitations such that $t$ can finally be a friend of $s$. We denote the probability that $t$ can accept the invitation from $s$ as the \textit{acceptance probability}. In general, one can consider either the minimization version which is to find the minimum set of the intermediate users such that acceptance probability can reach a certain threshold, or the maximization version which is to maximize the acceptance probability under the size constraint of the invitations. In this paper, we will study the minimization version of the active friending problem. 

\textbf{The state-of-the-art.} The idea of active friending was first proposed by Yang \textit{et al.} \cite{yang2013maximizing} where the friending process was modeled based on the cascade model. In particular, they proposed the Selective Invitation with Tree and In Node Aggregation (SITINA) algorithm which derives the optimal solution to the maximum active friending problem when the underlying graph is approximated by a tree. Following this line, the authors in \cite{chen2014could} studied the same problem but considered the case when the network forms a DAG. Recently, the maximum active friending problem under the general graph was studied in \cite{yuan2017active} where an algorithm with a data-dependent\footnote{We say an approximation ratio is data-dependent if it depends on the social network structure and cannot be determined by only the number of nodes and edges.} approximation ratio was provided. In addition, it is shown in \cite{yuan2017active} that the active friending problem is NP-hard and the objective function is \#P-hard to compute. For the active friending problem, the existing works either provide heuristic algorithms or optimal solutions to special graphs, and to the best of our knowledge, there is no approximation algorithm available for neither the maximization version or minimization version on any of the popular operations models without assuming simplified graph structures. We in this paper make an attempt towards filling this gap by investigating the combinatorial structure behind the active friending problem under the threshold model.

\textbf{Contribution.} We study the minimum active friending problem under the threshold model and present a randomized algorithm with an approximation ratio of $O(\sqrt{n})$ where $n$ is the number of users. The proposed algorithm utilizes two ingredients: (a) a solution to the minimum subset cover problem to overcome the NP-hardness and (b) an estimating method to overcome the \#P-hardness. In addition to the theoretical analysis, the proposed algorithm consistently outperforms the trivial baseline methods, as shown in the simulations done on real-world social networks.

\textbf{Reproducibility.} The implementation of the proposed algorithm and the materials used in our experiments are made publicly available online \cite{git_activefriending}.

\textbf{Roadmap.} The preliminaries are provided in Sec. \ref{sec: pre}. The proposed algorithm and the theoretical analysis are shown in Sec. \ref{sec: algorithm}. In Sec. \ref{sec: exp}, we present the experimental settings and results. A brief survey of the related work is given in Sec. \ref{sec: related}. Sec. \ref{sec: conc} concludes this paper and discuss future work.

\section{Preliminaries}
\label{sec: pre}
\subsection{Model and Friending Process}
\label{subsec: model}
A snapshot of the social network is given by an undirected graph $G=(V,E)$ where $V$ and $E$ denote the user set and the current set of friendship, respectively. For two users $u$ and $v$, they are online friends iff $(u,v)\in E$. We use $n$ and $m$ to denote the number of users and edges, respectively. Associated with each ordered pair $(u,v)$ of users where $u$ and $v$ are friends, there is a weight $w_{(u,v)} \in (0,1]$ which characterizes the $v$'s familiarity with $u$. Note that $w_{(u,v)}$ is not necessarily equal to $w_{(v,u)}$ as the familiarity may not be symmetric. We use $N_v=\{u| (u,v)\in E\}$ to denote the current friends of user $v$. For the pair $u$ and $v$ where $u$ and $v$ are not friends, we explicitly set $w_{(u,v)}=w_{(v,u)}=0$. When two users $u$ and $v$ are not friends yet, $v$ is willing to accept the invitation from $u$ when they have enough mutual friends. In particular, each user $v$ is associated with a threshold $\theta_v$ and $v$ can accept the invitation from $u$ if $\sum_{v^{'} \in V^{'}} w_{(v^{'},v)} \geq \theta_v$ where $V^{'}$ is set of the mutual friends of $u$ and $v$. In order to handle the unobserved information, for each user $v$, we consider the $\theta_v$ uniformly selected from $[0,1]$ and assume $\sum_u w_{(u,v)} \leq 1$ after normalization. As aforementioned, we use $s$ and $t$ to denote the initiator and the target user, respectively. 

Suppose the set of the current friends of $s$ is $C$. We use $\Phi(C)$ to denote the set of the user who is not a friend of $s$ but is willing to be the friend of $s$. That is,
\begin{equation*}
\Phi(C)=\{u | u \notin C, \sum_{v \in N_u \cap C} w_{(v,u)} \geq \theta_u \}.
\end{equation*}
Since we have $w_{(v,u)}=0$ for the users $u$ and $v$ who are currently not friends, it is equivalent that 
\begin{equation}
\label{eq: phi}
\Phi(C)=\{u | u \notin C, \sum_{v \in C} w_{(v,u)} \geq \theta_u \}.
\end{equation}
For an invitation set $I \subseteq V$, the users in $I$ are called \textit{invited users}. Note that only the invited users can be the new friends of $s$. Given an invitation set $I \subseteq V$, the friending process goes round by round, shown as follows. 
\begin{process}
\label{processs: first}
Initially, $C_0(I)=N_s$ and the threshold of each user is randomly determined. Repeatedly obtain $C_{i+1}$ by 
\begin{equation}
\label{eq: update_C}
C_{i+1}(I)=C_{i}(I)\cup (\Phi(C_{i}(I)) \cap I),
\end{equation}
until $\Phi(C_i(I)) \cap I$ is empty or $t \in C_{i+1}(I)$. Let $C_{\infty}(I)$ be the $C_{i}(I)$ when the friending process terminates. $C_{\infty}(I)$ is in fact all the friends of $s$ under $I$, and therefore $t \in C_{\infty}(I)$ means the friending process is successful.
\end{process}

An example for illustration is shown below.

\begin{example}
\label{example: process}
Consider a network shown in Fig. \ref{fig: process} where $w_{(u,v)}=0.1$ for each ordered pair of users and suppose that the threshold of each user is $0.15$. Since $s$ and $t$ have no mutual friend at this time, inviting $t$ cannot  make $s$ successfully friend with $t$. Now suppose the invitation set is $\{v_{1},v_{2},v_{4},v_{6},v_{7}, t\}$. According to the process, $v_{1}$ and $v_{4}$ will be the first new friends of $s$, and, finally $v_{1}, v_{4}, v_{6}, v_{7}$ and $t$ will be the new friends of $s$. Note that $v_{3}$ could be the friend of $s$ but it does not receive an invitation, while $v_{2}$ receives an invitation but there are not enough mutual friends of $v_{2}$ and $s$.
\end{example}

\subsection{Minimum Active Friending}
For an invitation set $I \subseteq V$, we use $f(I)$ to denote the acceptance probability that $t$ can be a friend of $s$. In other words, $f(I)$ is the probability that $t$ appears in $C_{\infty}(I)$ under Process \ref{processs: first}.

\begin{remark}
The maximum value of $f(I)$ may not be one because the friending process does not necessarily succeed even if $I=V$. We use $p_{max}$ to denote the acceptance probability that is maximally possible. 
\end{remark}

We consider the following problem.

\begin{problem}[\textbf{Minimum Active Friending}]
\label{problem: mini}
Given a ratio $\alpha \in (0, 1]$, find an invitation set $I$ with the smallest size such that $f(I) \geq \alpha \cdot p_{max}$.
\end{problem}


\subsection{Minimum p-Union and Minimum Subset Cover}
Our algorithm for the active friending problem utilizes the existing results of the MpU problem.
\begin{problem}[\textbf{Minimum p-Union (MpU) Problem}]
\label{problem: MpU}
Given a set of elements $V$, a family $U$ of subsets of $V$ and an integer $p$, the MpU problem is to find a subset $U^{'}\subseteq U$ with $|U^{'}|=p$, such that $|\cup_{x \in U^{'}}x|$ is minimized.
\end{problem}
According to E. Chlamt\'{a}c \textit{et al.} \cite{chlamtac2018densest}, there exists a $(2\sqrt{|U|})$-approximation to the MpU problem. We denote this algorithm as the Chlamt\'{a}c algorithm, and we will take this algorithm as a subroutine to solve the active friending problem.

For a set of elements $V$ and two subsets $V_1, V_2 \subseteq V$, we say $V_1$ is \textit{covered} by $V_2$ iff $V_1 \subseteq V_2$. The minimum subset cover (MSC) problem is defined as follows.

\begin{problem}[\textbf{Minimum Subset Cover (MSC) Problem}]
\label{problem: msc}
Given a set of elements $V$, a family $U$ of subsets of $V$ and an integer $p$, find a subset $V^*$ of $V$ with the minimum cardinality such that at least $p$ subsets in $U$ are covered by $V^*$. 
\end{problem}

\begin{remark}
For any feasible solution $V^{'}$ covering $U^{'} \subseteq U$ with $|U^{'}|>p$, the union $V^{''}$ of any subset $U^{''}$ of $U^{'}$ with $|U^{''}|=p$ is also a feasible solution, and meanwhile $|V^{''}| =|\cup_{x \in U^{''}}|\leq |\cup_{x \in U^{'}}|\leq |V^{'}|$. To solve the MSC problem either optimally or approximately, it suffices to consider the subset of $V$ which is a union of exactly $p$ subsets of $U$, and consequently it is reduced to the MpU problem and the Chlamt\'{a}c algorithm provides a $(2\sqrt{|U|})$-approximation for the MSC problem. 
\end{remark}

\section{An Approximation Algorithm}
\label{sec: algorithm}
Now we are ready to present the algorithm for solving Problem \ref{problem: mini}. Our algorithm proceeds with two steps: (1) obtaining an unbiased estimator of the objective function by sampling; (2) maximizing the obtained estimator by using the Chlamt\'{a}c algorithm.
\subsection{An Unbiased Estimator of $f$}
We first introduce the preliminaries to construct the estimator of $f$. Note the friending process is in fact stochastic as the thresholds are generated randomly. The concept of \textit{realization} provides a derandomization of the friending process.

\begin{definition}[\textbf{Realization}]
\label{def: realization}
For a social network defined in Sec. \ref{subsec: model}, a realization is a mapping $g: V \rightarrow V$ randomly generated as follows. Each user $v$ randomly selects \textit{at most one} user among the initial friends where the friend $u \in N_v$ has the probability $w_{(u,v)}$ to be selected and with probability $1-\sum_{u \in N_v} w_{(u,v)}$ that $v$ selects no user. Define that 
\[
 g(v) \define
  \begin{cases}
  u &  \hspace{0mm} \hspace{-0.5mm} \text{if $v$ selects $u$} \\
  \aleph_0 & \hspace{0mm} \hspace{-0.5mm} \text{if $v$ selects no user} 
  \end{cases},
\]
where $\aleph_0 \notin V$ is an artificial user introduced for the purpose of analysis and $\aleph_0$ is not a friend of any user.
\end{definition}

We use $\mathcal{G}$ to denote the set of all possible realizations and let $\Pr[g]$ be the probability that $g \in \G$ can be generated. In addition, we use $\g$ to denote a random realization generated according to Def. \ref{def: realization}. The following process shows how to identify the new friends when the underlying realization is fixed.
\begin{process}
\label{process: realization}
For a realization $g$ and an invitation set $I \subseteq V$, we consider a set of nodes constructed step by step as follows. Initially, $H_{0}(g, I)= N_s$. Repeatedly obtain $H_{i+1}(g, I)$ by 
\begin{equation}
\label{eq: update_H}
H_{i+1}(g, I)=H_{i}(g, I) \cup (\Psi(H_{i}(g,I))\cap I)
\end{equation}
where 
\begin{equation}
\label{eq: psi}
\Psi(H_{i}(g,I))=\{v |~v\notin H_{i}(g, I),g(v)\in H_i(g, I)\} ,
\end{equation}
until $\Psi(H_{i}(g,I))\cap I=\emptyset$ or $t \in H_{i+1}(g,I)$. 
Let $H_{\infty}(g, I)$ be the set $H_{i}(g, I)$ when the process terminates. We use $f(g,I)$ to indicate that if $t$ belongs to $H_{\infty}(g, I)$, and $f(g,I)$ is define as
\[
f(g,I) \define
  \begin{cases}
  1 &  \hspace{0mm} \hspace{-0.5mm} \text{if $t \in H_{\infty}(g, I)$} \\
  0 & \hspace{0mm} \hspace{-0.5mm} \text{else } 
  \end{cases}
\]
Now let us take account of all the possible realizations and consider $\E[f(\g,I)]\define \sum_{g \in \G}\Pr[g] \cdot f(g,I)$. We use $H_{\infty}(\g,I)$ to denote the random set following the distribution: $\Pr[H_{\infty}(\g,I)=H_{\infty}(g,I)]=\Pr[g]$. Therefore, we have $\E[f(\g,I)]=\Pr[t \in H_{\infty}(\g,I)]$.
\end{process}

With an analysis similar to the one given in \cite{kempe2003maximizing}, we have the following result.
\begin{lemma}[Kempe \textit{et al.} \cite{kempe2003maximizing}]
\label{lemma: process}
$f(I)=\E[f(\g,I)]$.
\end{lemma}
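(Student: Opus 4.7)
The plan is to establish the identity by showing that the random sets $C_\infty(I)$ from Process \ref{processs: first} and $H_\infty(\g, I)$ from Process \ref{process: realization} share the same distribution on subsets of $V$, after which the claim follows from $f(I) = \Pr[t \in C_\infty(I)]$ and $\E[f(\g,I)] = \Pr[t \in H_\infty(\g,I)]$. This is the standard equivalence between the linear threshold formulation and its live-edge (realization) formulation due to Kempe, Kleinberg, and Tardos, so the proof should carry over with only minor bookkeeping changes to account for the invitation constraint $\cap\, I$, which is applied identically in both processes.

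The core of the argument is a coupling via the principle of deferred decisions. In Process \ref{processs: first}, rather than sampling all thresholds $\theta_v$ upfront, I would sample them lazily: having observed activation history up to round $i$, the only information revealed about each still inactive $v$ is the event $\{\theta_v > W_i(v)\}$ where $W_i(v) = \sum_{u \in C_i(I) \cap N_v} w_{(u,v)}$, so conditional on the history, $\theta_v$ is uniform on $(W_i(v), 1]$. In Process \ref{process: realization}, I would analogously defer the sampling of $g(v)$: the history only reveals $\{g(v) \notin H_i(g,I)\}$, so conditional on this, $g(v)$ is supported on $(V \setminus H_i(g,I)) \cup \{\aleph_0\}$ with probabilities proportional to the weights $w_{(u,v)}$.

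Next I would compare one-step activation probabilities for an inactive $v \in I$ conditioned on an identical history $C_i(I) = H_i(g,I)$. Under Process \ref{processs: first}, $v$ becomes active in round $i+1$ with probability $(W_{i+1}(v) - W_i(v)) / (1 - W_i(v))$ by the conditional uniformity just derived. Under Process \ref{process: realization}, the same event occurs iff $g(v) \in H_{i+1}(g,I) \setminus H_i(g,I)$, which, given $g(v) \notin H_i(g,I)$, has probability $\sum_{u \in H_{i+1}(g,I) \setminus H_i(g,I)} w_{(u,v)} / (1 - W_i(v))$. Since $\Phi$ and $\Psi$ reference only weights and since $W_{i+1}(v) - W_i(v)$ equals exactly that numerator, these two conditional probabilities agree. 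The thresholds $\theta_v$ are independent across $v$, and so are the draws $g(v)$, so the joint distribution over the set of nodes newly activated in round $i+1$ agrees between the two processes.

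An induction on $i$ then lifts this one-step agreement to equality in distribution of the full trajectories $(C_0, C_1, \ldots)$ and $(H_0(\g,I), H_1(\g,I), \ldots)$, and passing to the termination round gives $C_\infty(I) \stackrel{d}{=} H_\infty(\g,I)$, completing the proof. The main delicacy is the simultaneous activation of several nodes in a single round: one must argue that, conditionally on the history, these activations are mutually independent in both processes, which follows from the independence of the latent variables $\{\theta_v\}_{v}$ on one side and $\{g(v)\}_v$ on the other. I do not expect any genuinely new obstacle beyond that technicality, since the invitation constraint $I$ interacts trivially with the coupling.
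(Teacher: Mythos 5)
Your proposal is correct and follows essentially the same route as the paper's own proof in Appendix A: both use deferred decisions to generate the thresholds (resp.\ the realization) alongside the process, then show by induction that the conditional one-round activation probability of an uninvolved invited node equals $\sum_{v \in C_i\setminus C_{i-1}} w_{(v,u)} / \bigl(1-\sum_{v \in C_{i-1}} w_{(v,u)}\bigr)$ in both processes. Your added remark about mutual independence of the simultaneous activations within a round is a detail the paper leaves implicit, but it does not change the argument.
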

\begin{proof}
The idea is to show that $C_\infty(I)$ and $H_{\infty}(\g,I)$ have the same distribution with respect to the realizations. Please see Appendix \ref{appendix: a} for a detailed proof.
\end{proof}




\begin{figure*}[t]
\centering
\subfloat[Case a.]{\label{fig: case_a}\includegraphics[width=0.40\textwidth]{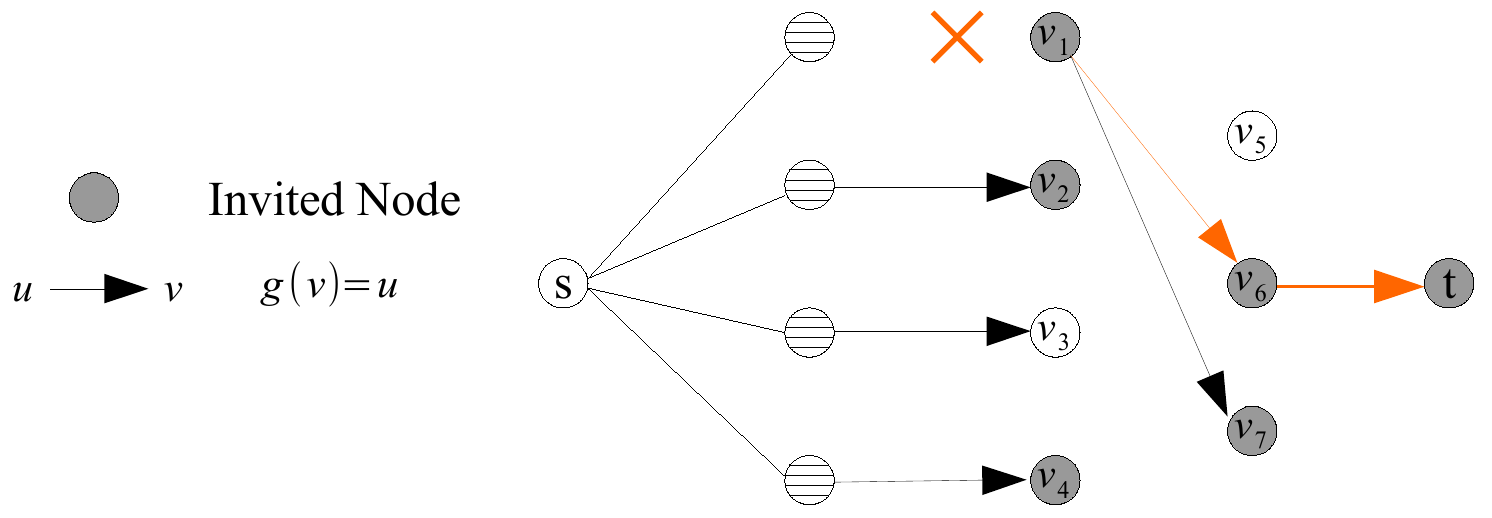}} 
\hspace{3mm}
\subfloat[Case b.]{\label{fig: case_b}\includegraphics[width=0.26\textwidth]{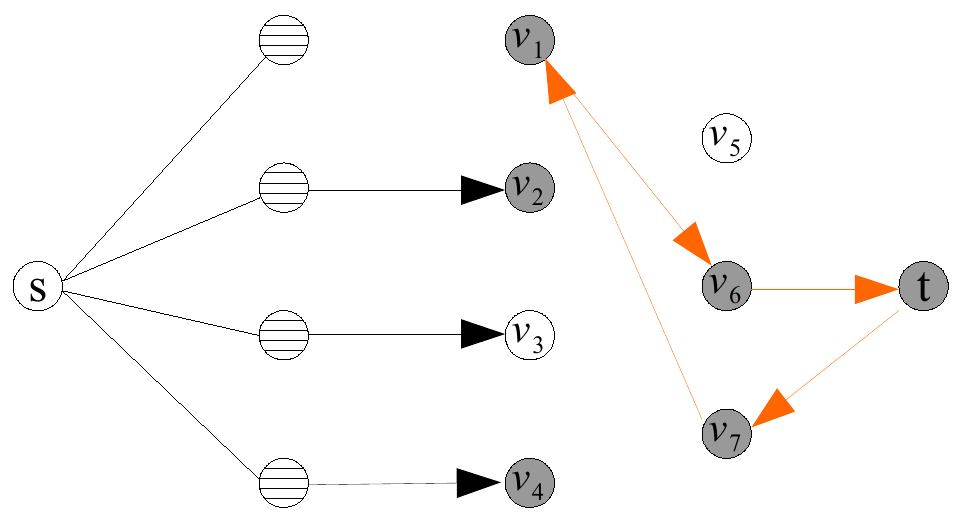}}  
\hspace{3mm}
\subfloat[Case c.]{\label{fig: case_c}\includegraphics[width=0.26\textwidth]{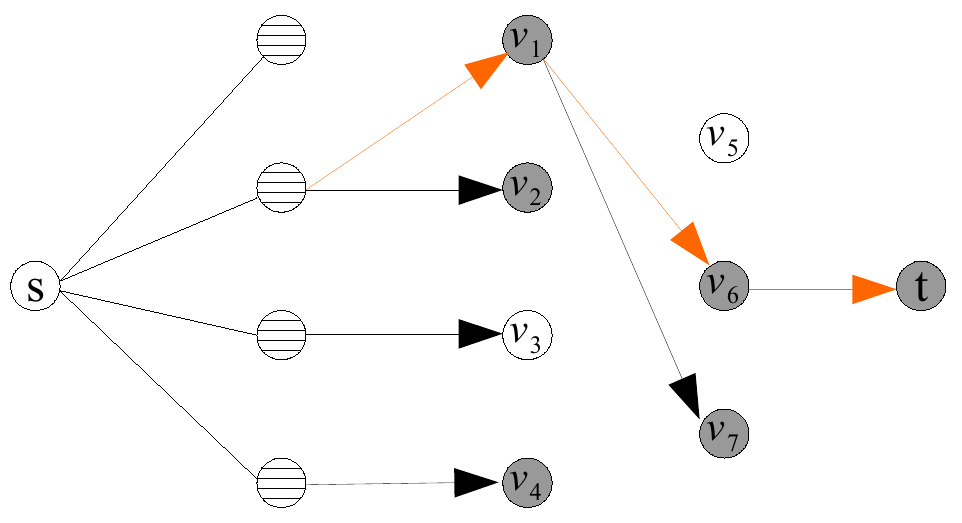}}  
\vspace{-1mm}
\caption{Cases in Lemma \ref{lemma: condition}.}
\label{fig: conditions}
\vspace{-3mm}
\end{figure*}

Next let us consider how to compute $f(g,I)$. That is, given an invitation set $I$, in which kind of realization $g$ that $t$ belongs to $H_{\infty}(g, I)$? It turns out that we do not have to generate the whole set $H_{\infty}(g, I)$ by Process \ref{process: realization}. Instead, it suffices to consider a user set $t(g)$ identified by Alg. \ref{alg: t(g)}. As shown in Alg. \ref{alg: t(g)}, we track the user back according to $g$ starting from the target $t$, and add the encountered users to $t(g)$, until no new node can be further reached or a node in $N_s$ has been reached. For each realization $g$ and invitation set $I$, we say $I$ \textit{covers} $g$ iff $t(g) \subseteq I$. The following is a key lemma showing the condition for $t$ to be a friend of $s$.

\begin{algorithm}[t]
\caption{t(g)}\label{alg: t(g)}
\begin{algorithmic}[1]
\State \textbf{Input:} $g, t$ and $N_s$;
\State \textbf{Output:} a user set $t(g)$;
\State $t(g) \leftarrow\{t\}$, $u^* \leftarrow t$;
\While {true};
\small
\If {$g(u^*) =\aleph_0 $} $t(g) \leftarrow t(g) \cup \{\aleph_0\} $ and return $t(g)$;
\EndIf
\If  {$g(u^*) \in t(g)$ } $t(g) \leftarrow t(g) \cup \{\aleph_0\} $ and return $t(g)$;
\EndIf
\If {$g(u^*) \in N_s$} return $t(g)$;
\EndIf
\State $t(g) \leftarrow t(g) \cup \{g(u^*)\}$ and $u^* \leftarrow g(u^*)$;
\EndWhile
\end{algorithmic}
\end{algorithm}

\begin{lemma}
\label{lemma: condition}
For each realization $g$ and invitation set $I$, $t$ can be a friend of $s$ in $g$ if and only if $I$ covers $g$. 
\end{lemma}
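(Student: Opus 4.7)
My plan is to characterize $t(g)$ explicitly as the deterministic forward trace of $g$ starting from $t$, and then match this trace against the update rule of Process \ref{process: realization} to argue both directions.

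First I would unpack Alg. \ref{alg: t(g)}: starting from $u^{*}=t$, we repeatedly move to $g(u^{*})$, and the loop can only terminate in one of the three ways depicted in Fig. \ref{fig: conditions}: (a) the trace reaches a user $u_k$ with $g(u_k)\in N_s$, giving $t(g)=\{u_0=t,u_1=g(t),\ldots,u_k\}$ with $u_i\notin N_s$ for $0\le i\le k$ and $\aleph_0\notin t(g)$; (b) the trace hits $g(u^{*})=\aleph_0$; or (c) the trace revisits a previously seen node (a cycle). In cases (b) and (c) the algorithm inserts $\aleph_0$ into $t(g)$, and since $\aleph_0\notin V\supseteq I$ we automatically have $t(g)\not\subseteq I$. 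Hence $I$ covers $g$ holds precisely when case (a) occurs and the chain $\{u_0,\ldots,u_k\}$ is contained in $I$.

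For the $(\Leftarrow)$ direction, assume $I$ covers $g$. Then case (a) must hold, and all $u_i$ lie in $I$ while $g(u_k)\in N_s=H_0(g,I)$. I would induct downward along the chain to show that $u_{k-j}\in H_{j+1}(g,I)$ for every $0\le j\le k$. The base case is $u_k\in H_1(g,I)$, which follows from \eqref{eq: update_H} because $u_k\in I\setminus H_0$ and $g(u_k)\in H_0$. The induction step is the same argument applied to $u_{k-j-1}\in I$, whose image is $u_{k-j}\in H_{j+1}(g,I)$. Taking $j=k$ gives $t=u_0\in H_{k+1}(g,I)\subseteq H_{\infty}(g,I)$, i.e.\ $f(g,I)=1$.

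For the $(\Rightarrow)$ direction, suppose $t\in H_{\infty}(g,I)$. Since $t\notin H_0=N_s$, the update rule \eqref{eq: update_H} forces $t\in I$ and $g(t)\in H_{i-1}(g,I)$ for some step $i\ge 1$ at which $t$ first entered. Applying the same observation recursively to $g(t),g^2(t),\ldots$ produces a sequence in which each intermediate term lies in $I$ and the step index strictly decreases. This rules out a cycle (which would reuse an index) and rules out hitting $\aleph_0$ (which would require $\aleph_0\in H_j$ for some $j$, impossible by the definitions of $H_0$ and $\Psi$ in \eqref{eq: psi}). Therefore the recursion terminates in $H_0=N_s$ after finitely many steps, which is exactly case (a) of Alg. \ref{alg: t(g)}, and every node of $t(g)$ lies in $I$.

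The one delicate point, which I regard as the main obstacle, is the $(\Rightarrow)$ argument that the backward-unrolled chain terminates cleanly in $N_s$ rather than looping or stalling at $\aleph_0$. The monotone decrease of the step index in $H_{i-1}\supseteq H_{i-2}\supseteq\cdots$ is what makes this work, and it is the reason why the artificial sink $\aleph_0$ introduced in Def. \ref{def: realization} is a faithful marker of failure.
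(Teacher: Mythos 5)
Your proposal is correct and follows essentially the same route as the paper's proof: both view the realization as a pointer chain traced backward from $t$, split into the same three termination cases of Alg.~\ref{alg: t(g)} (reaching $N_s$, hitting $\aleph_0$, or cycling), and match the surviving case against the update rule of Process~\ref{process: realization}. The only difference is that you spell out the two inductions (forward along $H_i$ for sufficiency, backward on first-entry indices for necessity) that the paper states as a one-line claim about invited paths; this added detail is sound and fills in nothing that changes the argument's structure.
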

\begin{proof}
According to Def. \ref{def: realization}, it is useful to imagine a realization as a directed graph where $(u,v)$ exists iff $g(v)=u$. The users connected to $t$ forms a path because each user can select at most one user among their initial friends. This path is exactly $t(g)$. According to Process \ref{process: realization}, $t$ can be a friend of $s$ if and only if there is a path from $N_s$ to $t$ where all the nodes in the path are invited. There are three cases to consider, as illustrated in Fig. \ref{fig: conditions}.
\begin{itemize}
\item Case a. Before reaching any user in $N_s$, the path ends with some user selecting no user (line 5 in Alg. \ref{alg: t(g)}).  
\item Case b. It forms a cycle (line 6 in Alg. \ref{alg: t(g)}). 
\item Case c. The path reaches some user $u^*$ in $N_s$ (line 7 in Alg. \ref{alg: t(g)}). 
\end{itemize}
Because the users are finite, each realization must be in one of the above three cases. For the first two cases, according to Process \ref{process: realization}, any invitation set $I$ cannot make $t$ be a friend of $s$, and accordingly, $t(g)$ contains the artificial user $\aleph_0$ so it cannot be a subset of any invitation set $I \subseteq V$. For the third case, since only an invited user can be a friend of $s$, $I$ should cover all the users through the path from $u^*$ to $t$ except $u^*$, i.e., $t(g) \subseteq I$. Thus, proved.
\end{proof}

The following result immediately follows from Lemmas \ref{lemma: process} and \ref{lemma: condition}.

\begin{corollary}
\label{coro: mean_I}
For each invitation set $I \subseteq V$, $\E[f($\emph{\c{g}}$,I)]=f(I)=\sum_{g \in \G}\Pr[g]f(g,I)$, where
\begin{equation}
\label{eq: f(I)}
f(g,I) =
  \begin{cases}
  1 &  \hspace{0mm} \hspace{-0.5mm} \text{if $t(g) \subseteq I$} \\
  0 & \hspace{0mm} \hspace{-0.5mm} \text{else } 
  \end{cases}.
\end{equation}
\end{corollary}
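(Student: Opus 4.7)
The plan is to observe that this corollary is essentially a bookkeeping combination of the two preceding results, so the proof should be short and simply spell out how Lemma \ref{lemma: process} and Lemma \ref{lemma: condition} fit together. First I would unpack the definition of expectation: by Process \ref{process: realization}, $\E[f(\g,I)]$ is defined as $\sum_{g \in \G}\Pr[g]\cdot f(g,I)$, so the second equality in the statement is just the definition applied to the random realization $\g$. The first equality, $f(I)=\E[f(\g,I)]$, is precisely the content of Lemma \ref{lemma: process}.

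Next I would justify the piecewise expression for $f(g,I)$. By Process \ref{process: realization}, $f(g,I)$ is $1$ when $t \in H_\infty(g,I)$ and $0$ otherwise, so it suffices to show that $t \in H_\infty(g,I)$ if and only if $t(g) \subseteq I$. But this is exactly what Lemma \ref{lemma: condition} established: $t$ is a friend of $s$ under realization $g$ and invitation set $I$ precisely when $I$ covers $g$, i.e., $t(g) \subseteq I$. Substituting this characterization into the expression for $f(g,I)$ yields the case split in \eqref{eq: f(I)}.

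Since both ingredients are already proved, there is no real technical obstacle here; the only thing to be careful about is making sure that the artificial user $\aleph_0$ is handled correctly in the ``else'' branch. Specifically, when $g$ falls into Case a or Case b of Lemma \ref{lemma: condition}, the set $t(g)$ contains $\aleph_0 \notin V$, so $t(g) \subseteq I$ is automatically false for every $I \subseteq V$, and correspondingly $f(g,I)=0$. This confirms that the two-line definition of $f(g,I)$ in \eqref{eq: f(I)} is consistent with the outcome of Process \ref{process: realization} on every realization, and concludes the corollary.
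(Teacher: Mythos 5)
Your proof is correct and follows the same route the paper intends: the corollary is stated there as an immediate consequence of Lemma \ref{lemma: process} (giving $f(I)=\E[f(\g,I)]=\sum_{g\in\G}\Pr[g]f(g,I)$) and Lemma \ref{lemma: condition} (giving the characterization $t\in H_\infty(g,I)$ iff $t(g)\subseteq I$). Your extra remark that $\aleph_0\notin V$ forces $t(g)\not\subseteq I$ in Cases a and b is a correct and welcome clarification, though the paper leaves it implicit.
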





\ 
\begin{definition}[Type-1/0 Realization]
For convenience, we say a realization $g$ is a type-1 realization if $\aleph_0 \notin t(g)$. Otherwise, we say $g$ is a type-0 realization. Furthermore, we use the binary value $y(g)$ to denote the type of realization $g$, and $y(g)$ is defined as \[y(g) \define
\begin{cases}
1 &  \hspace{0mm} \hspace{-0.5mm} \text{if $g$ is type-1} \\
0 & \hspace{0mm} \hspace{-0.5mm} \text{else } 
\end{cases}\]
\end{definition}

We can see that $t$ cannot be a friend of $s$ under a type-0 realization even if we send invitation to all the users. Therefore, $f(g,V)=1$ if and only if $y(g)=1$,. Following Corollary \ref{coro: mean_I}, we have the following result showing that $y(\g)$ is an unbiased estimator of $p_{max}$.

\begin{algorithm}[t]
\caption{Estimating $p_{max}$}\label{alg: estimate}
\begin{algorithmic}[1]
\State \textbf{Input:} $\epsilon$ and $N$;
\State $\Upsilon \leftarrow 1+\frac{4(e-2)(1+\epsilon)\ln(2/N)}{\epsilon^2}$;
\State $i \leftarrow 0$, $j\leftarrow 0$
\While {$j \leq \Upsilon$}
\State $i \leftarrow i+1$;
\State Generate a realization $g$ by Alg. \ref{alg: algorithm}.
\State $j \leftarrow j+y(g)$;
\EndWhile
\Return $\Upsilon/i$;
\end{algorithmic}
\end{algorithm}

\begin{corollary}
\label{coro: mean_p_max}
$\E[y($\emph{\c{g}}$)]=f(V)=p_{max}.$
\end{corollary}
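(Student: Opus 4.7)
The plan is to argue that $y(g)$ coincides pointwise with $f(g,V)$ on every realization $g$, so taking expectations and invoking Corollary \ref{coro: mean_I} with $I=V$ immediately yields the claim.

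First I would unpack what $f(g,V)$ says by plugging $I=V$ into \eqref{eq: f(I)}: we have $f(g,V)=1$ iff $t(g)\subseteq V$. Now the artificial user $\aleph_0$ is, by construction in Def. \ref{def: realization}, not an element of $V$. Hence $t(g)\subseteq V$ holds if and only if $\aleph_0\notin t(g)$, which is precisely the condition defining $y(g)=1$. In the three-case decomposition of Lemma \ref{lemma: condition}, the first two cases (path terminates with a ``no user'' choice, or closes into a cycle) are exactly the ones in which Alg. \ref{alg: t(g)} inserts $\aleph_0$ into $t(g)$, i.e.\ $y(g)=0$; the third case is exactly the one where $t(g)\subseteq V$, i.e.\ $y(g)=1$. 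So $y(g)=f(g,V)$ for every $g\in\G$.

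Taking expectations over the random realization $\g$ generated per Def. \ref{def: realization}, we get
\begin{equation*}
\E[y(\g)]=\sum_{g\in\G}\Pr[g]\,y(g)=\sum_{g\in\G}\Pr[g]\,f(g,V)=\E[f(\g,V)].
\end{equation*}
Applying Corollary \ref{coro: mean_I} with $I=V$ gives $\E[f(\g,V)]=f(V)$. Finally, $f(V)=p_{max}$ by the Remark following Problem \ref{problem: mini}: since $f$ is monotone in $I$ (inviting more users can only help the friending process spread further), the maximum acceptance probability $p_{max}$ is attained at $I=V$. Chaining these equalities yields $\E[y(\g)]=f(V)=p_{max}$.

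There is no real obstacle here; the only subtle point is justifying $f(V)=p_{max}$, which needs the (easy) observation that $I\mapsto f(I)$ is monotone so the maximum is attained at the full vertex set. Everything else is a direct consequence of how $\aleph_0$ was introduced and of the already-proved Corollary \ref{coro: mean_I}.
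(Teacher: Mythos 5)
Your proposal is correct and matches the paper's own (tersely stated) justification: the paper likewise observes that $f(g,V)=1$ iff $y(g)=1$ because $\aleph_0\notin V$, and then invokes Corollary \ref{coro: mean_I} with $I=V$ together with $f(V)=p_{max}$. Your added remark on the monotonicity of $I\mapsto f(I)$ is a reasonable elaboration of why $f(V)=p_{max}$, which the paper takes for granted.
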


Since $y(\g)$ is an unbiased estimator of $p_{max}$, the standard Monte Carlo Estimation can be applied to estimating $p_{max}$. The algorithm is shown in Alg. \ref{alg: estimate}. We have the following result due to \cite{dagum2000optimal}. 

\begin{lemma}[Dagum \textit{et al.} \cite{dagum2000optimal}]
\label{lemma: p_max^*}
For each $0 \leq \epsilon \leq 1$ and $N \geq 0$, there exists an algorithm which produces a $p_{max}^*$ such that \[
\Pr\Big[|p_{max}^*-p_{max}| \leq \epsilon \cdot p_{max}\Big] \geq 1-1/N,\] where the number of the used simulations is asymptotically bounded by 
\begin{equation}
l_0 \define \frac{\epsilon_0^2+4(e-2)(1+\epsilon_0)\ln(N/2)}{\epsilon^2_0 \cdot p_{max}}.
\end{equation}
\end{lemma}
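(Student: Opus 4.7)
The plan is to treat this as the classical stopping-rule result of Dagum, Karp, Luby and Ross. By Corollary \ref{coro: mean_p_max}, each sample $y(\g)$ is an independent Bernoulli trial with mean $p_{max}$, so the estimator $p_{max}^\ast$ produced by Alg. \ref{alg: estimate} is exactly $\Upsilon / i^\ast$ where $i^\ast$ is the first index $i$ at which the running sum $S_i \define \sum_{k=1}^i y(\g_k)$ reaches the threshold $\Upsilon$. The proof therefore splits into two independent pieces: an accuracy bound on $p_{max}^\ast$ and a bound on $\E[i^\ast]$.

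For the accuracy bound I would exploit the duality between the stopping time and the partial sums: $\{i^\ast \le n\}$ is the same event as $\{S_n \ge \Upsilon\}$. Consequently $\{|p_{max}^\ast - p_{max}| > \epsilon p_{max}\}$ is contained in the union of the deviation events $\{S_{n_-} \ge \Upsilon\}$ and $\{S_{n_+} < \Upsilon\}$ with $n_\pm \define \Upsilon / \big((1\pm\epsilon)p_{max}\big)$. I would then apply a two-sided multiplicative Chernoff bound for sums of independent Bernoullis of the form $\Pr[|S_n - np| \ge \epsilon n p] \le 2\exp(-\epsilon^2 np / c)$ with the sharpened constant $c = 4(e-2)(1+\epsilon)$. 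Plugging in $n = n_\pm$ and the choice $\Upsilon = 1 + 4(e-2)(1+\epsilon)\epsilon^{-2}\ln(N/2)$ makes each tail at most $1/(2N)$, so a union bound gives the $1 - 1/N$ success probability in the statement.

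For the sample-complexity bound I would use Wald's identity applied to the negative-binomial stopping rule: since $i^\ast$ is the waiting time for $\Upsilon$ successes in i.i.d.\ Bernoulli$(p_{max})$ trials, $\E[i^\ast] = \Upsilon / p_{max}$. Substituting the value of $\Upsilon$ gives
\begin{equation*}
\E[i^\ast] \;=\; \frac{1 + 4(e-2)(1+\epsilon)\ln(N/2)}{\epsilon^2\, p_{max}},
\end{equation*}
which is precisely the quantity $l_0$ (with $\epsilon_0 = \epsilon$) up to the asymptotic absorption of the numerator's leading $1$ into $\epsilon_0^2$. A short concentration argument for $i^\ast$ (or a direct Markov-style inequality conditioned on the good event) upgrades this expected bound to the asymptotic bound on the actual number of simulations used by Alg. \ref{alg: estimate}.

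The main obstacle will be tracking the sharp constant $4(e-2)$: a straightforward Chernoff bound yields only $2$ or $3$ in the denominator, and obtaining $4(e-2)$ requires the refined inequality $(1+x)\ln(1+x) - x \ge x^2/\bigl(2(1 + x/3)\bigr)$ combined with the particular scaling of $\epsilon$ baked into $n_\pm$. Beyond that constant bookkeeping, the rest of the argument is the standard Bernoulli concentration and Wald's identity, neither of which uses any structure of the underlying friending process beyond the i.i.d.\ Bernoulli nature of $y(\g)$.
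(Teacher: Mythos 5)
The paper gives no proof of this lemma --- it is imported verbatim (including the $4(e-2)$ constant and the $\Upsilon$ threshold of Alg.~\ref{alg: estimate}) from the Stopping Rule Theorem of Dagum, Karp, Luby and Ross \cite{dagum2000optimal}, and your proposal is exactly the standard proof of that theorem: stopping-time/partial-sum duality plus a multiplicative Chernoff bound for the accuracy claim, and Wald's identity for the expected number of samples. The only quibbles are cosmetic: your $n_\pm$ labels are swapped (the deviation events are $\{S_{n_+}\ge\Upsilon\}$ and $\{S_{n_-}<\Upsilon\}$ when $n_\pm=\Upsilon/((1\pm\epsilon)p_{max})$), and the $4(e-2)$ in \cite{dagum2000optimal} actually arises from $e^x\le 1+x+(e-2)x^2$ for $x\le 1$ rather than from the Bennett--Bernstein refinement you invoke --- though, since for Bernoulli variables the latter is uniformly at least as strong, your route still closes.
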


\textbf{An Idea.} According to Corollary \ref{coro: mean_I}, $\sum_{g \in \G}\Pr[g] \cdot f(g,I)$ is in fact an explicit formula of $f(I)$. However, it is not feasible to directly maximize it because its value cannot be efficiently computed as there are exponential number of realizations in $\G$. Alternatively, we consider a set $B_l=\{\g_1,...,\g_l\}$ of $l$ random realizations each of which is generated independently at random. We partition the realizations in $B_l$ into two subsets $B_l^0$ and $B_l^1$ where $B^0_l=\{\g \in B|~y(\g)=0\}$ and $B_l^1=\{\g \in B|~y(\g)=1\}$ are the sets of the type-0 realizations and type-1 realizations, respectively. For each set $B_l$ of realizations and $I \subseteq V$, define that 
\begin{equation*}
F(B_l, I) \define \sum_{g \in B_l}f(g,I).
\end{equation*}
Note that $F(B_l, V)=|B_l^1|$, and we will use $F(B_l, V)$ and $|B_l^1|$ interchangeably. According to Corollaries \ref{coro: mean_I} and \ref{coro: mean_p_max}, $|F(B_l, V)/l-p_{max}|$ and $|F(B_l, I)/l-f(I)|$ can be arbitrarily small provided that $l$ is sufficiently large. As a result, for an invitation set $I$ satisfying $F(B_l, I)\geq \alpha \cdot F(B_l, V)$, $f(I)\geq \alpha \cdot p_{max}$ should be ensured with a high probability when $l$ is sufficiently large. Furthermore, it is desired to find the $I$ with the minimum cardinality such that $F(B_l, I)\geq \alpha \cdot F(B_l, V)$. Finally, because type-0 realization cannot be covered by any invitation set, it suffices to consider the type-1 realizations in $B_l$. Thus, this is equivalent to solving the following problem.
\begin{problem}
Given a collection $B_l^1$ of type-1 realizations, and an integer $p \leq |B_l^1|$, find a subset $V^*$ of $V$ with the minimum cardinality such that at least $p$ realizations in $B_l^1$ are covered, i.e., $F(B_l, V^*) \geq p$.
\end{problem}

We can easily check that this problem can be reduced to the MSC problem with the input $V, \{t(g_1),...,t(g_{|B_l^1|})\}$ and $p$. Therefore, the Chlamt\'{a}c algorithm can produce an invitation set $I^*$ such that for each $I^{'} \subseteq V$ with $F(B_l, I^{'})\geq p$,
\begin{equation}
\label{eq: alpha}
F(B_l, I^*) \geq p,
\end{equation}
and
\begin{equation}
\label{eq: chlamtac}
|I^*| \leq 2\sqrt{|B_l^1|} \cdot |I^{'}|.
\end{equation}

{
\begin{algorithm}[t]
\caption{The framework}\label{alg: algorithm}
\begin{algorithmic}[1]
\State \textbf{Input:} $\beta$ and $l$;
\State Generate $l$ random realizations $B_l$ and let $B_l^1 \subseteq B_l$ be the set of the realization(s) $g$ with $y(g)=1$.
\State Solve the MSC problem approximately by the Chlamt\'{a}c algorithm with input $V, \{t(g_1),...,t(g_{|B_l^1|})\}$ and $\lceil \beta \cdot |B_l^1| \rceil$. Let $I^*$ be the output.
\State Return $I^*$;
\end{algorithmic}
\end{algorithm}}

\subsection{The Algorithm}
\textbf{Framework.} The framework is shown in Alg. \ref{alg: algorithm} with two parameters $\beta$ and $l$. We first generate $l$ realizations among which we denote the set of the type-1 realizations as $B_l^1=\{g_1,...,g_{|B_l^1|}\}$. And then obtain an invitation set $I^*$ by running the Chlamt\'{a}c algorithm with input: $V, \{t(g_1),...,t(g_{|B_l^1|})\}$ and $\beta \cdot |B_l^1|$. The rest of this section aims at determining $l$ and $\beta$ such that the performance can be guaranteed.

We use $I_{\alpha}$ to denote the optimal solution to Problem \ref{problem: mini} associated with the input $\alpha$, and let $I^*$ be the solution produced by Alg. \ref{alg: algorithm}. In addition, let $0 <\epsilon< \alpha$ and $N>0$ be two parameters which are used to control the performance. Since the algorithm is randomized, our goal is to find an invitation set $I^*$ such that, with probability at least $1-1/N$, we have $f(I^*) \geq (\alpha-\epsilon) \cdot p_{max}$ and meanwhile $|I^*|/|I_{\alpha}|$ can be bounded by a provable factor. Throughout this section, we assume $\epsilon$ and $N$ are fixed.





We use the following centrality inequalities to analyze the accuracy of the estimations. Let $X_i \in [0,1]$ be $l$ i.i.d random variables where $E(X_i)=\mu$. For each $\delta > 0$, the Chernoff bound \cite{motwani2010randomized} states that 
\begin{equation}
\label{eq:chernoff_1}
\mathrm{Pr}\Big[|\sum X_i - l \cdot \mu| \geq \delta \cdot l \cdot  \mu  \Big]\leq 2 \exp(-\frac{l \cdot \mu \cdot \delta^2}{2+\delta})
\end{equation}



\textbf{A Sufficient Condition.} Let $\epsilon_0$, $\epsilon_1$ $\in (0,1)$ be some parameters that will be determined later, and $p_{max}^*$ be an estimate of $p_{max}$ obtained by Monte Carlo simulation. Suppose a set $B_l=\{\g_1,...,\g_l\}$ of $l$ random realizations is used in Alg. \ref{alg: algorithm}, and let $B_l^1$ be the set of the type-1 realizations in $B_l$. The following equation system will be sufficient to ensure the desired performance guarantees.

\begin{eqs}
\label{eqs:1}
\begin{align}
\label{eq: eqs_accu_pmax}
&|p_{max}^*-p_{max}| \leq \epsilon_0 \cdot p_{max}  \\
\label{eq: eqs_accu_I}
&|F(B_l, I)/l-f(I)|\leq \epsilon_1 \cdot p_{max}^*, \text{~for each~} I \subseteq V  \\
\label{eq: eqs_beta}
&\beta=\frac{\alpha-\epsilon_1\cdot (1+\epsilon_0)}{1+\epsilon_1\cdot (1+\epsilon_0)}>0 \\
\label{eq: eqs_epsilon}
&\beta \cdot \big(1-\epsilon_1\cdot (1+\epsilon_0)\big)-\epsilon_1\cdot (1+\epsilon_0)=\alpha-\epsilon 
\end{align}
\end{eqs}

First, Equation System 1 ensures the objective value is bounded.
\begin{lemma}
\label{lemma: prob}
With Equation System \ref{eqs:1}, $f(I^*) \geq (\alpha-\epsilon)\cdot p_{max}$.
\end{lemma}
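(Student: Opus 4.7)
The plan is to chain the inequalities encoded in Equation System~\ref{eqs:1} together with the feasibility guarantee \eqref{eq: alpha} of the Chlamt\'{a}c subroutine invoked inside Alg.~\ref{alg: algorithm}. The key observation is that the definitions of $\beta$ and $\epsilon$ in \eqref{eq: eqs_beta}--\eqref{eq: eqs_epsilon} have been reverse-engineered precisely so that the sampling errors on $p_{max}$ and on $f(I^*)$ compose into exactly the target bound. Since the argument is deterministic once Equation System~\ref{eqs:1} is assumed to hold, there is no real conceptual obstacle; the lemma is essentially a short bookkeeping exercise.

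First I would unpack the output of the Chlamt\'{a}c call. Alg.~\ref{alg: algorithm} invokes it with the integer threshold $\lceil \beta \cdot |B_l^1| \rceil$, so the feasibility part \eqref{eq: alpha} of its guarantee yields
\[
F(B_l, I^*) \;\geq\; \lceil \beta \cdot |B_l^1| \rceil \;\geq\; \beta \cdot |B_l^1| \;=\; \beta \cdot F(B_l, V),
\]
which is a meaningful requirement because \eqref{eq: eqs_beta} explicitly forces $\beta > 0$. Next I would lower-bound $F(B_l, V)/l$: applying \eqref{eq: eqs_accu_I} with $I = V$ and recalling that $f(V) = p_{max}$ (Corollary~\ref{coro: mean_p_max}) gives $F(B_l, V)/l \geq p_{max} - \epsilon_1 p_{max}^*$, and then \eqref{eq: eqs_accu_pmax} (i.e.\ $p_{max}^* \leq (1+\epsilon_0)p_{max}$) upgrades this to $F(B_l, V)/l \geq (1-\epsilon_1(1+\epsilon_0))\, p_{max}$. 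Chaining with the previous display produces $F(B_l, I^*)/l \geq \beta\bigl(1-\epsilon_1(1+\epsilon_0)\bigr)\, p_{max}$.

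Finally I would apply \eqref{eq: eqs_accu_I} one more time, now with $I = I^*$, to pass from the sample quantity back to $f(I^*)$: this gives $f(I^*) \geq F(B_l, I^*)/l - \epsilon_1 p_{max}^*$, and bounding $\epsilon_1 p_{max}^* \leq \epsilon_1(1+\epsilon_0)p_{max}$ via \eqref{eq: eqs_accu_pmax} once more yields
\[
f(I^*) \;\geq\; \bigl[\beta\bigl(1-\epsilon_1(1+\epsilon_0)\bigr) - \epsilon_1(1+\epsilon_0)\bigr]\, p_{max},
\]
and the bracketed quantity equals $\alpha - \epsilon$ by the calibration \eqref{eq: eqs_epsilon}, which closes the argument. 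The only care-points in executing this plan are (i) invoking \eqref{eq: eqs_accu_pmax} on both occurrences of $\epsilon_1 p_{max}^*$ so that everything is normalized against the true $p_{max}$, and (ii) passing cleanly through the ceiling via $\lceil \beta |B_l^1| \rceil \geq \beta |B_l^1|$ so that the integer MSC threshold aligns with the fractional target $\beta F(B_l, V)$.
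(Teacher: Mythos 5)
Your proposal is correct and follows essentially the same chain of inequalities as the paper's own proof: Chlamt\'{a}c feasibility to lower-bound $F(B_l,I^*)$ by $\beta F(B_l,V)$, two applications of Eq.~(\ref{eq: eqs_accu_I}) (at $I=V$ and $I=I^*$) normalized via Eq.~(\ref{eq: eqs_accu_pmax}), and the calibration Eq.~(\ref{eq: eqs_epsilon}) to close. The only difference is cosmetic ordering, plus your explicit handling of the ceiling, which the paper leaves implicit.
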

\begin{proof}
By Eqs. (\ref{eq: eqs_accu_I}) and (\ref{eq: alpha}), we have 
\begin{equation*}
f(I^*) \geq F(B_l, I^*)/l-\epsilon_1 \cdot p_{max}^*,
\end{equation*}
and
\begin{equation*}
F(B_l, I^*) \geq \beta \cdot |B_l^1|=\beta \cdot F(B_l, V).
\end{equation*}
Putting the above together, we have
\begin{equation}
\label{eq: lemma_prob_1}
f(I^*) \geq \beta \cdot F(B_l, V)/l-\epsilon_1 \cdot p_{max}^*.
\end{equation}
On the other hand, applying Eq. (\ref{eq: eqs_accu_I}) to $I=V$, we have 
\begin{equation*}
F(B_l, V)/l \geq f(V)-\epsilon_1 \cdot p_{max}^*= p_{max}-\epsilon_1 \cdot p_{max}^*,
\end{equation*}
and combining Eq. (\ref{eq: lemma_prob_1}), we have 
\begin{equation*}
f(I^*) \geq \beta \cdot (p_{max}-\epsilon_1 \cdot p_{max}^*)-\epsilon_1 \cdot p_{max}^*
\end{equation*}
Furthermore, due to Eq. (\ref{eq: eqs_accu_pmax}), we have \[f(I^*) \geq \beta \cdot (p_{max}-\epsilon_1 \cdot (1+\epsilon_0)\cdot p_{max})-\epsilon_1 \cdot (1+\epsilon_0)\cdot p_{max}.\] Finally, because of Eq. (\ref{eq: eqs_epsilon}), we have $f(I^*) \geq (\alpha-\epsilon)\cdot p_{max}$. 
\end{proof}

Second, Equation System 1 ensures the size of the solution is bounded.
\begin{lemma}
\label{lemma: size}
With Equation System \ref{eqs:1}, $|I^*| \leq 2\sqrt{|B_l^1|}\cdot |I_{\alpha}|$.
\end{lemma}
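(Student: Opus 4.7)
The plan is to show that the optimal solution $I_\alpha$ to the active friending problem is itself a feasible solution to the MSC instance that the Chlamt\'{a}c algorithm is actually run on, after which Eq. (\ref{eq: chlamtac}) immediately delivers the stated $2\sqrt{|B_l^1|}$ factor. So everything reduces to verifying the inequality $F(B_l, I_\alpha) \geq \lceil \beta \cdot |B_l^1|\rceil$.

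First I would lower-bound $F(B_l, I_\alpha)/l$. By applying Eq. (\ref{eq: eqs_accu_I}) with $I = I_\alpha$, together with the optimality property $f(I_\alpha) \geq \alpha \cdot p_{max}$ and the upper estimate $p_{max}^* \leq (1+\epsilon_0)\,p_{max}$ obtained from Eq. (\ref{eq: eqs_accu_pmax}), one gets
\begin{equation*}
\frac{F(B_l, I_\alpha)}{l} \;\geq\; f(I_\alpha) - \epsilon_1\, p_{max}^* \;\geq\; p_{max}\bigl(\alpha - \epsilon_1(1+\epsilon_0)\bigr).
\end{equation*}
Next, I would upper-bound $\beta\cdot F(B_l, V)/l$. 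Applying Eq. (\ref{eq: eqs_accu_I}) with $I = V$, using $f(V) = p_{max}$, and again invoking $p_{max}^* \leq (1+\epsilon_0)\,p_{max}$ yields
\begin{equation*}
\beta\cdot \frac{F(B_l, V)}{l} \;\leq\; \beta\cdot p_{max}\bigl(1+\epsilon_1(1+\epsilon_0)\bigr).
\end{equation*}

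The crucial step is then to notice that the choice of $\beta$ in Eq. (\ref{eq: eqs_beta}) is precisely what makes these two bounds match: since $\beta\bigl(1+\epsilon_1(1+\epsilon_0)\bigr) = \alpha - \epsilon_1(1+\epsilon_0)$ by Eq. (\ref{eq: eqs_beta}), the lower bound on $F(B_l, I_\alpha)/l$ is at least the upper bound on $\beta\cdot F(B_l, V)/l$. Multiplying through by $l$ and using $F(B_l, V) = |B_l^1|$ gives $F(B_l, I_\alpha) \geq \beta \cdot |B_l^1|$. Because $F(B_l, I_\alpha)$ is integer-valued, this in fact upgrades to $F(B_l, I_\alpha) \geq \lceil \beta \cdot |B_l^1|\rceil$, so $I_\alpha$ is a feasible solution for the MSC instance solved in Alg. \ref{alg: algorithm}.

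Finally, invoking the Chlamt\'{a}c $2\sqrt{|U|}$-approximation guarantee in the form of Eq. (\ref{eq: chlamtac}) with $I' = I_\alpha$ yields $|I^*| \leq 2\sqrt{|B_l^1|}\cdot |I_\alpha|$, which is the desired bound. I expect the only delicate point to be the bookkeeping in the first two inequalities: one has to consistently convert between $p_{max}$ and $p_{max}^*$ so that the resulting coefficients exactly match the algebraic identity baked into Eq. (\ref{eq: eqs_beta}). Everything else (feasibility of $I_\alpha$, integer rounding, and the final invocation of the Chlamt\'{a}c guarantee) is mechanical once the bounds are aligned.
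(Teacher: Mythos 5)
Your proof is correct and follows essentially the same route as the paper's: both reduce the claim to showing $F(B_l, I_\alpha) \geq \beta \cdot |B_l^1|$ via the two applications of Eq.~(\ref{eq: eqs_accu_I}) (to $I_\alpha$ and to $V$), the conversion between $p_{max}$ and $p_{max}^*$ through Eq.~(\ref{eq: eqs_accu_pmax}), and the algebraic identity in Eq.~(\ref{eq: eqs_beta}), before invoking Eq.~(\ref{eq: chlamtac}). Your explicit integrality step upgrading $F(B_l, I_\alpha) \geq \beta \cdot |B_l^1|$ to $F(B_l, I_\alpha) \geq \lceil \beta \cdot |B_l^1|\rceil$ is a small but welcome tightening that the paper's proof leaves implicit.
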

\begin{proof}
Note that $I^*$ is obtained by the Chlamt\'{a}c algorithm with the input $V, \{t(g_1),...,t(g_{|B_l^1|})\}$ and $\lceil \beta \cdot |B_l^1| \rceil$. By Eq. (\ref{eq: chlamtac}), it suffices to show that $F(B_l, I_{\alpha}) \geq \beta \cdot |B_l^1|$. Applying Eq. (\ref{eq: eqs_accu_I}) to $I=I_{\alpha}$, we have \[F(B_l, I_{\alpha})/l \geq f(I_{\alpha})-\epsilon_1 \cdot p_{max}^*.\] Because $I_{\alpha}$ is the optimal solution to Problem \ref{problem: mini}, we have $I_{\alpha} \geq \alpha \cdot p_{max}$ and therefore, 
\[F(B_l, I_{\alpha})/l \geq \alpha \cdot p_{max}-\epsilon_1 \cdot p_{max}^*\] Combining Eq. (\ref{eq: eqs_accu_pmax}), we further have 
\begin{equation}
\label{eq: temp_1}
F(B_l, I_{\alpha})/l \geq (\alpha-\epsilon_1 \cdot (1+\epsilon_0)) \cdot p_{max}.
\end{equation}
On the other hand, applying Eq. (\ref{eq: eqs_accu_I}) to $I=V$, we have \[F(B_l, V)/l-f(V) \leq \epsilon_1 \cdot p_{max}^* \leq \epsilon_1 \cdot (1+\epsilon_0)\cdot p_{max},\]
where the last inequality follows from Eq. (\ref{eq: eqs_accu_pmax}). Since $f(V)=p_{max}$, we have 
\begin{equation*}
p_{max}\geq \frac{F(B_l, V)}{l\cdot (1+\epsilon_1 \cdot (1+\epsilon_0))}=\frac{|B_l^1|}{l\cdot (1+\epsilon_1 \cdot (1+\epsilon_0))}.
\end{equation*}
Combining Eqs. (\ref{eq: temp_1}) and (\ref{eq: eqs_beta}), this implies that 
\begin{equation*}
F(B_l, I_{\alpha}) \geq \beta \cdot |B_l^1|.
\end{equation*}
Thus, proved.
\end{proof}

According to the above two lemmas, we have the desired performance guarantee provided that Equation System \ref{eqs:1} is satisfied. 

\textbf{Making Equation System \ref{eqs:1} Satisfied.} Due to Lemma \ref{lemma: p_max^*}, an estimate $p_{max}^*$ satisfying Eq. (\ref{eq: eqs_accu_pmax}) is obtainable. Furthermore, there exist $\epsilon_0$ and $\epsilon_1$ that are able to make Eqs. (\ref{eq: eqs_beta}) and (\ref{eq: eqs_epsilon}) satisfied, because the LHS of Eq. (\ref{eq: eqs_epsilon}) approaches to $\alpha$ when $\epsilon_0$ and $\epsilon_1$ approach to 0. In addition, $\beta$ is given by $\epsilon_0$ and $\epsilon_1$. Thus, the only part left to consider is Eq. (\ref{eq: eqs_accu_I}). According to Corollary \ref{coro: mean_I}, Eq. (\ref{eq: eqs_accu_I}) can be satisfied if $l$ is sufficiently large. In particular, a threshold is given in the next lemma.

\begin{lemma}
\label{lemma: l}
With probability at least $1-1/N$, $|F(B_l, I)/l-f(I)|\leq \epsilon_1 \cdot p_{max}^*$ holds for each $I \subseteq V$, if $|p_{max}^*-p_{max}| \leq \epsilon_0 \cdot p_{max}$ and $l \geq l^*$ where 
\begin{equation}
\label{eq: l^*}
l^* \define \frac{(\ln 2+\ln N+n\ln 2)\cdot(2+\epsilon_1\cdot (1-\epsilon_0))}{\epsilon_1^2 \cdot (1-\epsilon_0)^2 \cdot p_{max}^*}.
\end{equation}
\end{lemma}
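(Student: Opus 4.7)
The plan is to recognize that for any fixed invitation set $I$, the quantities $f(\g_1,I),\ldots,f(\g_l,I)$ are i.i.d.\ Bernoulli random variables with common mean $f(I)$ (by Corollary \ref{coro: mean_I}), so $F(B_l,I)$ is a sum of independent $[0,1]$-variables and the multiplicative Chernoff bound in Eq.~(\ref{eq:chernoff_1}) is directly applicable. I would then complete the proof by a union bound over all $2^n$ subsets of $V$ and solving the resulting inequality for $l$.

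Concretely, first fix $I \subseteq V$ and set $\mu = f(I)$ and $\delta = \epsilon_1 p_{max}^{*}/f(I)$, so that the deviation event $|F(B_l,I) - l\cdot f(I)| \geq \epsilon_1 p_{max}^{*} \cdot l$ coincides with the event in Eq.~(\ref{eq:chernoff_1}). Substituting into the Chernoff bound gives
\begin{equation*}
\Pr\!\Big[|F(B_l,I)/l - f(I)| \geq \epsilon_1 p_{max}^{*}\Big] \leq 2\exp\!\left(-\frac{l\,(\epsilon_1 p_{max}^{*})^2}{2f(I)+\epsilon_1 p_{max}^{*}}\right).
\end{equation*}
The next step is to remove the dependence on the unknown $f(I)$. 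Since $f(I)\leq p_{max}$ and we are given the event $|p_{max}^*-p_{max}|\leq \epsilon_0 p_{max}$, in particular $p_{max}\leq p_{max}^{*}/(1-\epsilon_0)$, one bounds the denominator $2f(I)+\epsilon_1 p_{max}^{*}$ from above in terms of $p_{max}^{*}$, $\epsilon_0$, and $\epsilon_1$ alone. Plugging this upper bound back yields a failure probability that depends only on $l,\epsilon_0,\epsilon_1,p_{max}^{*}$, and $n$.

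Second, take a union bound over all $2^n$ possible choices of $I\subseteq V$. Requiring the union-bounded failure probability to be at most $1/N$ amounts to demanding
\begin{equation*}
2^{n+1}\exp\!\left(-\frac{l\,\epsilon_1^2 (1-\epsilon_0)^{\!*} p_{max}^{*}}{2+\epsilon_1(1-\epsilon_0)}\right) \leq \frac{1}{N},
\end{equation*}
where the exponent has been lower-bounded using Step 1. Taking logarithms and rearranging isolates $l$ and produces a threshold of exactly the form in Eq.~(\ref{eq: l^*}); the $n\ln 2$ term in the numerator is the cost of the $2^n$ union bound, the $\ln N + \ln 2$ accounts for the confidence and the factor of $2$ in the two-sided Chernoff bound, and the $(1-\epsilon_0)$ factors arise from converting $p_{max}$ to $p_{max}^{*}$.

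The main obstacle is the uniform-over-$I$ requirement, which forces a union bound over exponentially many sets. This is tolerable only because the Chernoff exponent is linear in $l$, so paying $n\ln 2$ in the numerator of $l^{*}$ is cheap; a naive Hoeffding bound would give a similar exponent but without the $p_{max}^{*}$ in the denominator, and would therefore lose the crucial dependence that makes the estimator efficient when $p_{max}$ is small. The other delicate point is choosing the Chernoff deviation parameter $\delta$ so that the resulting inequality is stated in terms of $p_{max}^{*}$ rather than the unknown $f(I)$; this is handled by the substitution $\delta=\epsilon_1 p_{max}^{*}/f(I)$ followed by the $f(I)\leq p_{max}\leq p_{max}^{*}/(1-\epsilon_0)$ bound described above.
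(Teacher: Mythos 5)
Your proposal is correct and takes essentially the same route as the paper's proof: the multiplicative Chernoff bound applied with $\mu=f(I)$ and $\delta=\epsilon_1 p_{max}^*/f(I)$, the conversion $f(I)\leq p_{max}\leq p_{max}^*/(1-\epsilon_0)$ to eliminate the unknown $f(I)$ from the exponent, and a union bound over the $2^n$ subsets of $V$ yielding the $n\ln 2$ term in $l^*$. The only blemish is the stray exponent in your displayed union-bound inequality ($(1-\epsilon_0)^{*}$ should read $(1-\epsilon_0)^{2}$ after the denominator bound is folded in), which is cosmetic rather than substantive.
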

\begin{proof}
For a certain subset $I \subseteq V$, by the Chernoff bound, 
\[\mathrm{Pr}\Big[|F(B_l, I)/l-f(I)| \geq \frac{\epsilon_1 \cdot p_{max}^*}{f(I)} \cdot f(I)  \Big]\] 
is no larger than \[2\exp(-\frac{l \cdot  \epsilon_1^2 \cdot (p_{max}^*)^2}{2f(I)+\epsilon_1 \cdot  p_{max}^*}).\] Because $p_{max}^* \geq(1-\epsilon_0) \cdot p_{max}$ and $l \geq l^*$, this probability is no larger than $\frac{1}{N2^n}$. Note that there are $2^n$ subsets of $V$, Due to the union bound, with probability at least $1-1/N$, \[|F(B_l, I)/l-f(I)|\leq \epsilon_1 \cdot p_{max}^*\] holds simultaneously for all the subsets.
\end{proof}

\textbf{RAF Algorithm.} Given $\epsilon$ and $N$, the whole process consists of three steps: (1) determine $\epsilon_0$ and $\epsilon_1$ such that Eqs. (\ref{eq: eqs_beta}) and (\ref{eq: eqs_epsilon}) are satisfied; (2) obtain an estimate $p_{max}^*$ of $p_{max}$ by Lemma \ref{lemma: p_max^*} with $\epsilon_0$ and $N$; (3) obtain an invitation set by Alg. \ref{alg: algorithm} with $\beta$ and $l=l^*$. We denote this algorithm as the Realization-based Active Friending (RAF) algorithm.

\begin{algorithm}[t]
\caption{RAF algorithm}\label{alg: whole}
\begin{algorithmic}[1]
\State \textbf{Input:} $\alpha$, $\epsilon$ and $N$;
\State Determine $\epsilon_0$, $\epsilon_1$ and $\beta$ by solving Eq. (\ref{eq: new_eqsys}).
\State Obtain $p_{max}^*$ by Alg. \ref{alg: estimate} with $\epsilon_0$ and $N$.
\State Compute $l^*$ according to Eq. (\ref{eq: l^*});
\State $I^{*} \leftarrow$ Alg. \ref{alg: algorithm} with $\beta$ and $l^*$;
\State Return $I^*$;
\end{algorithmic}
\end{algorithm}

\begin{theorem}
\label{theorem: main}
With probability at least $1-2/N$, the RAF algorithm outputs an invitation set $I^*$ such that $f(I^*) \geq (\alpha-\epsilon)\cdot p_{max}$ and $|I^*|/|I_{\alpha}|= O(\sqrt{n})$. 
\end{theorem}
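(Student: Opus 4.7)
The plan is to combine the lemmas developed in the section via a union bound over two high-probability events, and then translate the factor $2\sqrt{|B_l^1|}$ delivered by Lemma \ref{lemma: size} into the advertised $O(\sqrt{n})$ ratio. The first event, supplied by Lemma \ref{lemma: p_max^*} applied in Step 3 of the RAF algorithm, is that the Monte-Carlo estimate satisfies $|p_{max}^*-p_{max}|\leq \epsilon_0\cdot p_{max}$; it holds with probability at least $1-1/N$. The second event, conditional on the first and supplied by Lemma \ref{lemma: l} applied in Step 5 with $l=l^*$, is that $|F(B_l,I)/l-f(I)|\leq \epsilon_1\cdot p_{max}^*$ holds simultaneously for every $I\subseteq V$; it likewise holds with probability at least $1-1/N$. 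A union bound then shows that both events occur with probability at least $1-2/N$.

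On this good event, I would argue that Equation System \ref{eqs:1} is satisfied in full: Equations (\ref{eq: eqs_accu_pmax}) and (\ref{eq: eqs_accu_I}) are the two stochastic conditions just established, while Step 2 of the RAF algorithm deterministically chooses $\epsilon_0,\epsilon_1,\beta$ so that Equations (\ref{eq: eqs_beta}) and (\ref{eq: eqs_epsilon}) hold (such a choice exists because the LHS of (\ref{eq: eqs_epsilon}) tends to $\alpha$ as $\epsilon_0,\epsilon_1\to 0$, as noted in the text preceding Lemma \ref{lemma: l}). Lemma \ref{lemma: prob} then immediately yields $f(I^*)\geq (\alpha-\epsilon)\cdot p_{max}$, and Lemma \ref{lemma: size} yields $|I^*|\leq 2\sqrt{|B_l^1|}\cdot |I_\alpha|$.

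The hard part, and the only calculation that is not already packaged by the earlier lemmas, is to convert the data-dependent factor $\sqrt{|B_l^1|}$ into an $O(\sqrt{n})$ bound by controlling the random quantity $|B_l^1|$. On the good event, instantiating Equation (\ref{eq: eqs_accu_I}) at $I=V$ gives
\[
\frac{|B_l^1|}{l} \;=\; \frac{F(B_l,V)}{l} \;\leq\; f(V) + \epsilon_1\cdot p_{max}^* \;=\; \bigl(1+\epsilon_1(1+\epsilon_0)\bigr)\cdot p_{max}.
\]
Substituting $l=l^*$ from Equation (\ref{eq: l^*}) and using $p_{max}^*\geq (1-\epsilon_0)\cdot p_{max}$ to cancel the $p_{max}$ in the numerator against the $p_{max}^*$ in the denominator of $l^*$, one obtains $|B_l^1| = O(n)$ when $\epsilon,\epsilon_0,\epsilon_1,N$ are treated as constants. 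Taking the square root yields $|I^*|/|I_\alpha| = O(\sqrt{n})$, which, combined with the acceptance-probability bound above, establishes the theorem.
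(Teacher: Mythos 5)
Your proposal is correct and follows essentially the same route as the paper: a union bound over the Monte-Carlo event of Lemma \ref{lemma: p_max^*} and the uniform-concentration event of Lemma \ref{lemma: l} establishes Equation System \ref{eqs:1} with probability $1-2/N$, Lemmas \ref{lemma: prob} and \ref{lemma: size} are then invoked, and $|B_l^1|=F(B_l,V)$ is bounded by $O(n)$ by instantiating Eq.~(\ref{eq: eqs_accu_I}) at $I=V$ and substituting $l=l^*$. The only cosmetic difference is that you treat $N$ as a constant where the paper allows $N=O(n^K)$; since $l^*$ depends on $N$ only through $\ln N$, both give $|B_l^1|=O(n)$ and hence the $O(\sqrt{n})$ ratio.
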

\begin{proof}
By Lemma \ref{lemma: p_max^*}, $|p_{max}^*-p_{max}| \leq \epsilon_0 \cdot p_{max}$ is valid with probability at least $1-1/N$. Under this condition, By Lemma \ref{lemma: l}, Eq. (\ref{eq: eqs_accu_I}) is satisfied with probability at least $1-1/N$. Therefore, the Eqs. \ref{eqs:1} holds with probability at least $1-2/N$ and we have the results in Lemmas \ref{lemma: prob} and  \ref{lemma: size}. Note that $|B_l^1|=F(B_l,V)$. By Eqs. (\ref{eq: eqs_accu_pmax}) and (\ref{eq: eqs_accu_I}), \[F(B_l,V)\leq l \cdot (\frac{1}{1+\epsilon_0}+\epsilon_1)\cdot p_{max}^*.\] Because (a) $l=l^*$ and (b) $\epsilon_0$ and $\epsilon_0$ are independent of $n$ and $N$, $F(B_l,V)=O(n)$ for each fixed $\epsilon$ and $N=O(n^K)$ for each $K \in \mathbb{Z}^+$. Thus, $|I^*|/|I_{\alpha}|= O(\sqrt{n})$.
\end{proof}

The only part left is the selection of $\epsilon_0$ and $\epsilon_1$. According to Eq. (\ref{eq: eqs_epsilon}), $\epsilon_1$ becomes relatively small when $\epsilon_0$ is relatively large, which means the time used to estimate $p_{max}$ becomes shorter and the time taken by Alg \ref{alg: algorithm} becomes longer due to the increase in $l^*$. Therefore, there is a trade-off between the running time of  step 2 and step 3. We adopt the setting that $\epsilon_0=n \cdot \epsilon_1$ such that the running time of step 2 and step 3 have the same asymptotic order with respect to $n$, and therefore, we obtain $\epsilon_0$ and $\epsilon_1$ by solving
\begin{equation}
\label{eq: new_eqsys}
\begin{aligned}
&\epsilon_0=n\cdot \epsilon_1\\
&\beta\cdot \big(1-\alpha(1+\epsilon_1)\big)-\epsilon_1\cdot (1+\epsilon_0)=\alpha-\epsilon\\
&\beta=\frac{\alpha-\epsilon_1\cdot (1+\epsilon_0)}{1+\epsilon_1\cdot (1+\epsilon_0)} \\
\end{aligned}
\end{equation}

\begin{table*}[t]
\renewcommand{\arraystretch}{1.8}
\centering
{\begin{tabular}{| C{2.7cm} || C{2.7cm}| C{2.7cm}| C{2.7cm}| C{2.7cm} |}
\hline
\textbf{} 	& \textbf{Wiki} 		& \textbf{HepTh }& \textbf{HepPh} &  \textbf{Youtube} \\   
\hline
\hline
\textbf{nodes \#} 	& 7K		& 28K		& 35K &  1.1M\\
\hline
\textbf{edges \#}	& 103K		& 353K	& 421K &  6.0M \\
\hline
\textbf{Avg. Degree}	& 14.7		& 12.6	& 12.0  &  5.54 \\
\hline
\end{tabular}}
\vspace{3mm}
\caption{Datasets}
\label{table: datasets}
\vspace{3mm}
\end{table*}

The whole process is formally given in Alg. \ref{alg: whole}. Because the Chlamt\'{a}c algorithm is polynomial and the time for generating a realization is $O(m)$, the whole algorithm is polynomial.

\begin{remark}
The $t(\g)$ of a random realization $\g$ can be generated by the reverse sampling approach proposed by C. Borgs \textit{et al.} \cite{borgs2014maximizing}, enabling us to avoid sampling every edge in the graph and thus enhance practical efficiency. However, the worst-case is still $O(m)$.
\end{remark}


\subsection{A special case: $\alpha=1$}
\label{subsec: special}
Though Problem \ref{problem: mini} is hard to solve for the general case, it is polynomial-time solvable when we are looking for the the invitation set achieving $p_{max}$ (i.e., $\alpha=1$). Clearly we have $p_{max}=f(V)$ but we are interested the minimum set $I \subseteq V$ such that $f(I)=p_{max}$. Let $V_{max} \subseteq V$ be the set of nodes where a node $u$ is in $V_{max}$ iff $u$ is on some path from a node in $\{s\}\cup N_s$ to $t$ and $u \notin \{s\}\cup N_s$. It turns out that $V_{max}$ is minimum set resulting in the maximum acceptance probability, as shown in the next lemma.

\begin{lemma}
\label{lemma: mini_p_max}
$f(V_{max})=p_{max}$ and $V_{max}$ is the unique minimum invitation set that achieves $p_{max}$.
\end{lemma}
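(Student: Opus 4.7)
My plan is to work entirely at the level of realizations via Corollary \ref{coro: mean_I}, which gives $f(I) = \sum_{g} \Pr[g] \cdot \mathbf{1}[t(g) \subseteq I]$. Two facts need to be established: (i) every type-1 realization has $t(g) \subseteq V_{max}$, so $f(V_{max}) = p_{max}$, and (ii) for every $v \in V_{max}$ there is a type-1 realization of positive probability whose $t(g)$ contains $v$, which will force $v$ to belong to any invitation set achieving $p_{max}$.

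For (i), I would inspect Alg. \ref{alg: t(g)} on a type-1 realization $g$. Since $y(g)=1$, the algorithm terminates at line 7 after tracing backward from $t$ through $g$ and reaching a node $u^*$ with $g(u^*) \in N_s$, without hitting $\aleph_0$ or forming a cycle. The sequence $g(u^*), u^*, \ldots, g(t), t$ is therefore a simple path in $G$ from $N_s$ to $t$, and every node of $t(g)$ lies on this path; moreover, the stopping rule guarantees that no node of $t(g)$ is in $N_s$, and none equals $s$. Thus $t(g) \subseteq V_{max}$, and summing against $\Pr[g]$ gives $f(V_{max}) = \sum_{g:\, y(g)=1}\Pr[g] = f(V) = p_{max}$.

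For (ii), given any $v \in V_{max}$, by definition there is a simple path $u_0, u_1, \ldots, u_k$ from some $u_0 \in \{s\}\cup N_s$ to $u_k = t$ passing through $v = u_j$ for some $j \geq 1$; shortening the path by one step if $u_0 = s$, we may assume $u_0 \in N_s$. I would then construct a realization $g$ by setting $g(u_i) = u_{i-1}$ for $i = 1,\ldots,k$ and choosing $g(u)$ on every other node to be any admissible value with positive probability. Every weight $w_{(u_{i-1},u_i)}$ is strictly positive since the edges exist, so $\Pr[g] > 0$, and running Alg. \ref{alg: t(g)} from $t$ produces exactly $t(g) = \{u_1,\ldots,u_k\} \ni v$. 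Now if $f(I) = p_{max}$, then since $f(g,I) \leq f(g,V) = y(g)$ for every $g$ and the two sums are equal, $f(g,I) = 1$ (equivalently $t(g) \subseteq I$) for every type-1 realization of positive probability; applying this to the constructed $g$ gives $v \in I$. Combined with (i), $V_{max}$ is the unique minimum invitation set achieving $p_{max}$.

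The only point that needs a bit of care is verifying that the auxiliary realization in step (ii) is consistent with Def. \ref{def: realization} — each vertex must map to at most one friend, and the whole assignment must have positive probability. This is routine, but deserves an explicit check at vertices $u$ with $\sum_w w_{(w,u)} = 1$, where $g(u)$ cannot be $\aleph_0$; at such vertices we simply pick any friend with $w_{(\cdot,u)}>0$, which exists by the degree constraint. Nothing beyond this mild bookkeeping is required.
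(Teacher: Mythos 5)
Your overall route is the same as the paper's: both proofs work at the level of realizations via Corollary~\ref{coro: mean_I}, establishing (i) that $t(g)\subseteq V_{max}$ for every type-1 realization, and (ii) that every $v\in V_{max}$ lies in $t(g)$ for some positive-probability type-1 realization. Your part (i) is carried out more explicitly than the paper's one-sentence version, and your closing step in (ii) --- that any $I$ with $f(I)=p_{max}$ must satisfy $f(g,I)=1$ for every positive-probability type-1 $g$, hence must contain every $v\in V_{max}$ --- is actually tighter than the paper's argument, which only shows $f(V_{max}\setminus\{u\})<f(V_{max})$ and thus minimality of $V_{max}$ rather than directly excluding a different set of equal or smaller size.

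There is, however, one genuine gap in your construction for (ii), which the paper's own proof elides as well. After setting $g(u_i)=u_{i-1}$ along the chosen path, Algorithm~\ref{alg: t(g)} does not necessarily return $\{u_1,\dots,u_k\}$: the backward trace from $t$ terminates as soon as it meets a node whose image lies in $N_s$ (line 7), so if some intermediate node $u_{i}$ with $1\le i\le k-1$ happens to lie in $N_s$, the algorithm returns only $\{u_{i+1},\dots,u_k\}$ and may never reach $v$. Your argument therefore requires a path whose \emph{only} node in $\{s\}\cup N_s$ is $u_0$, and you neither impose nor justify this. The point is not mere bookkeeping: under a literal reading of the definition of $V_{max}$ (``on some path from a node in $\{s\}\cup N_s$ to $t$''), a node $v$ all of whose routes to $t$ pass through $N_s$ again after $v$ belongs to $V_{max}$ yet appears in no $t(g)$, so it can be deleted without decreasing $f$, and the uniqueness claim fails for it (e.g., $N_s=\{a,b\}$ with $a$--$v$--$b$--$t$ and $b$ the sole neighbor of $t$). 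The lemma is salvaged by reading $V_{max}$ as the set of nodes on a path from $N_s$ to $t$ that meets $\{s\}\cup N_s$ only at its first vertex; with that reading your construction goes through verbatim. The consistency check you do flag (realizability and positive probability of the auxiliary $g$) is indeed routine; this earlier step is the one that needed the care.
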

\begin{proof}

First, a node $u$ not in $\{s\}\cup N_s\cup V_{max}$ cannot be in the set $t(g)$ for any realization $g$, and therefore $V_{max} \cup \{u\}$ cannot cover more realizations than $V_{max}$ does. Thus, according to Eq. (\ref{eq: f(I)}), $p_{max}=f(V_{max})$.  Second, for any node $u$ in $V_{max}$, it belongs to at least one path from some node in $\{s\}\cup N_s$ to $t$, and therefore it must be in the $t(g)$ for some realization $g$ in $\G$ with $\aleph_0 \notin t(g)$. Consequently, due to the removal of $u$, $V_{max} \setminus \{u\}$ fails to cover at least one realization that was covered by $V_{max}$. As a result, again according to Eq. (\ref{eq: f(I)}), $f(V_{max}\setminus \{u\})<f(V_{max})$, which implies that $V_{max}$ is the minimum set that achieves $p_{max}$.
\end{proof}
Since $V_{max}$ can be computed by the simple graph search, the problem is polynomial-time solvable. Furthermore, since each $I^*$ produced by Alg. \ref{alg: algorithm} must be a subset of $V_{max}$, Theorem \ref{theorem: main} is still valid if we replace the $n$ in Eq. (\ref{eq: l^*}) by $|V_{max}|$, which slightly reduces the running time of Alg. \ref{alg: whole}.



\section{Performance Evaluation}
\label{sec: exp}
In this section, we present the experiments for evaluating the proposed algorithm. Our experiments were performed on a server with a 3.6 GHz quad-core processor.

\textbf{Datasets.} We considered four social network datasets borrowed from J. Leskovec \cite{snapnets}, of which the statistics are listed in Table \ref{table: datasets}. \textit{Wiki} is a small who-votes-on-whom social network collected from Wikipedia. \textit{HepTh} and \textit{HepPh} are two middle-size citation networks of Arxiv.org from the categories of High Energy Physics Phenomenology and High Energy Physics Theory, respectively. \textit{Youtube} is a large social network collected Youtube.com.

\textbf{Friending Model.} Following the convention \cite{kempe2003maximizing}, we consider the setting where $w_{(u,v)}=1/|N_v|$. This setting has been widely considered in the prior work (e.g., \cite{chen2010scalable,yuan2017active,kempe2003maximizing}). 


\textbf{Problem Setting.} For each dataset, we randomly select $500$ pairs of $s$ and $t$ with $p_{max}$ no less than $0.01$ and report the average results. The threshold of $p_{max}$ helps in ruling out the case when $p_{max}$ is extremely small. Such cases are not interesting as the friending process is pessimistic even if we sent invitations to all the other users, implying that no strategy can be effective. The value $p_{max}$ is estimated by Monte Carlo simulation for each pair of $s$ and $t$. 

\textbf{Baseline Algorithms.} Notice that the existing algorithms provided by Yuan \textit{et al.} \cite{yuan2017active}, Yang \textit{et al.} \cite{yang2013maximizing} and Chen \textit{et al.} \cite{chen2009make} are designed for the maximum active friending problem, whereas RAF solves its minimization version. 
Therefore, we compare the RAF algorithm with two popular heuristics, Shortest Path (SP) algorithm and High Degree (HD) algorithm. When selecting invited nodes, HD prefers the node with the highest degree while SP prefers the nodes on the shortest path from $s$ to $t$. The solutions given by these three algorithms are denoted as, $I_{RAF}$, $I_{HD}$ and $I_{SP}$, respectively.

We conducted a series of experiments. In the first experiment, we compare the performance of RAF, HD and SP when the size of the invitation set is fixed where the size of the invitation set is given by the solution from RAF. In the second experiment, we compare RAF, HD and SP by examining the number of invited nodes they need to reach a certain friending probability. In the third experiment, we compare the solution given by RAF with $V_{max}$ to examine the input-output ratio. These experiments are presented in the following subsections.

\begin{figure*}[t]
\centering
\subfloat[Wiki]{\label{fig: wiki_exp_1}\includegraphics[width=0.24\textwidth]{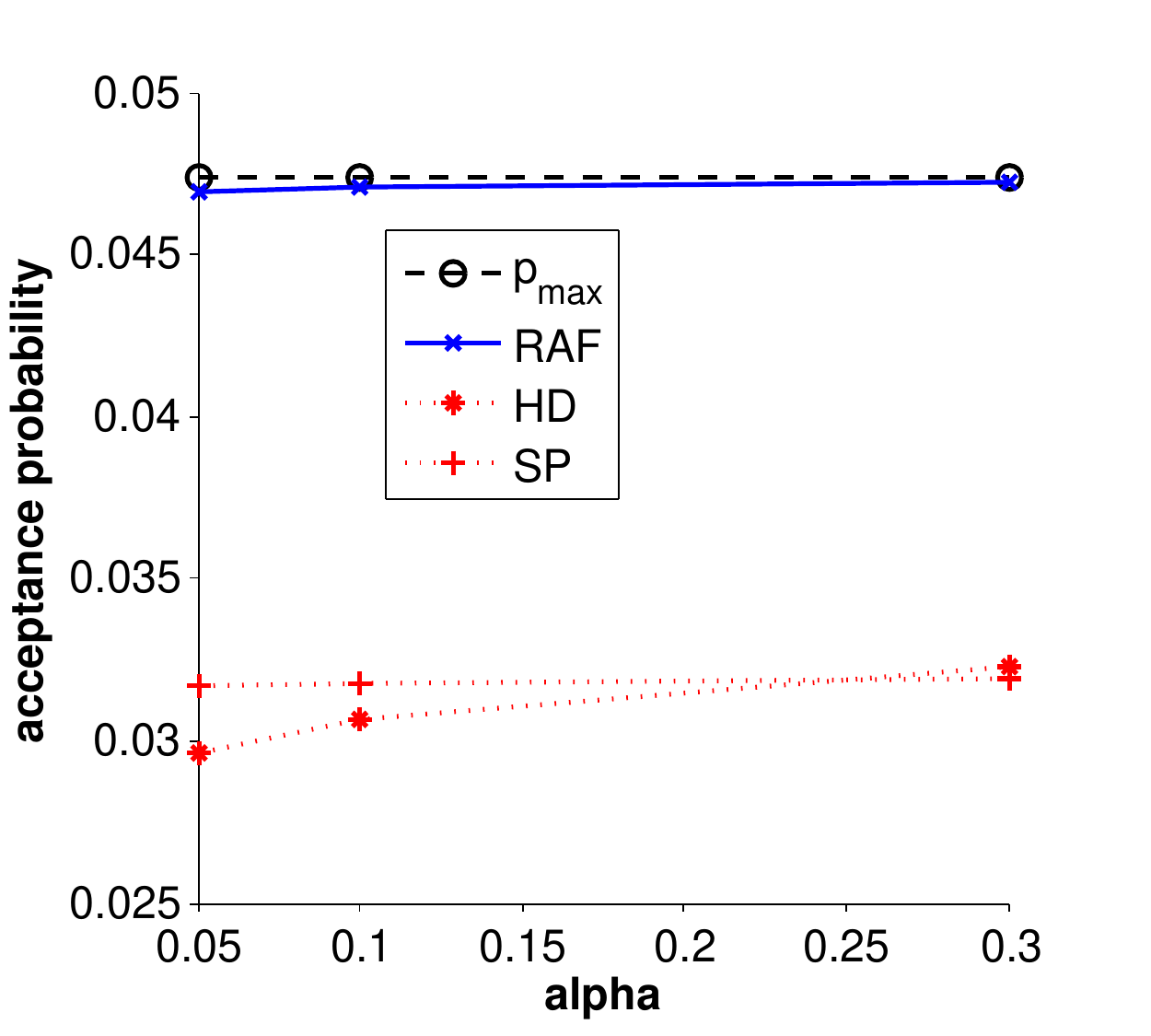}} 
\subfloat[HepPh]{\label{fig: hepph_exp_1}\includegraphics[width=0.24\textwidth]{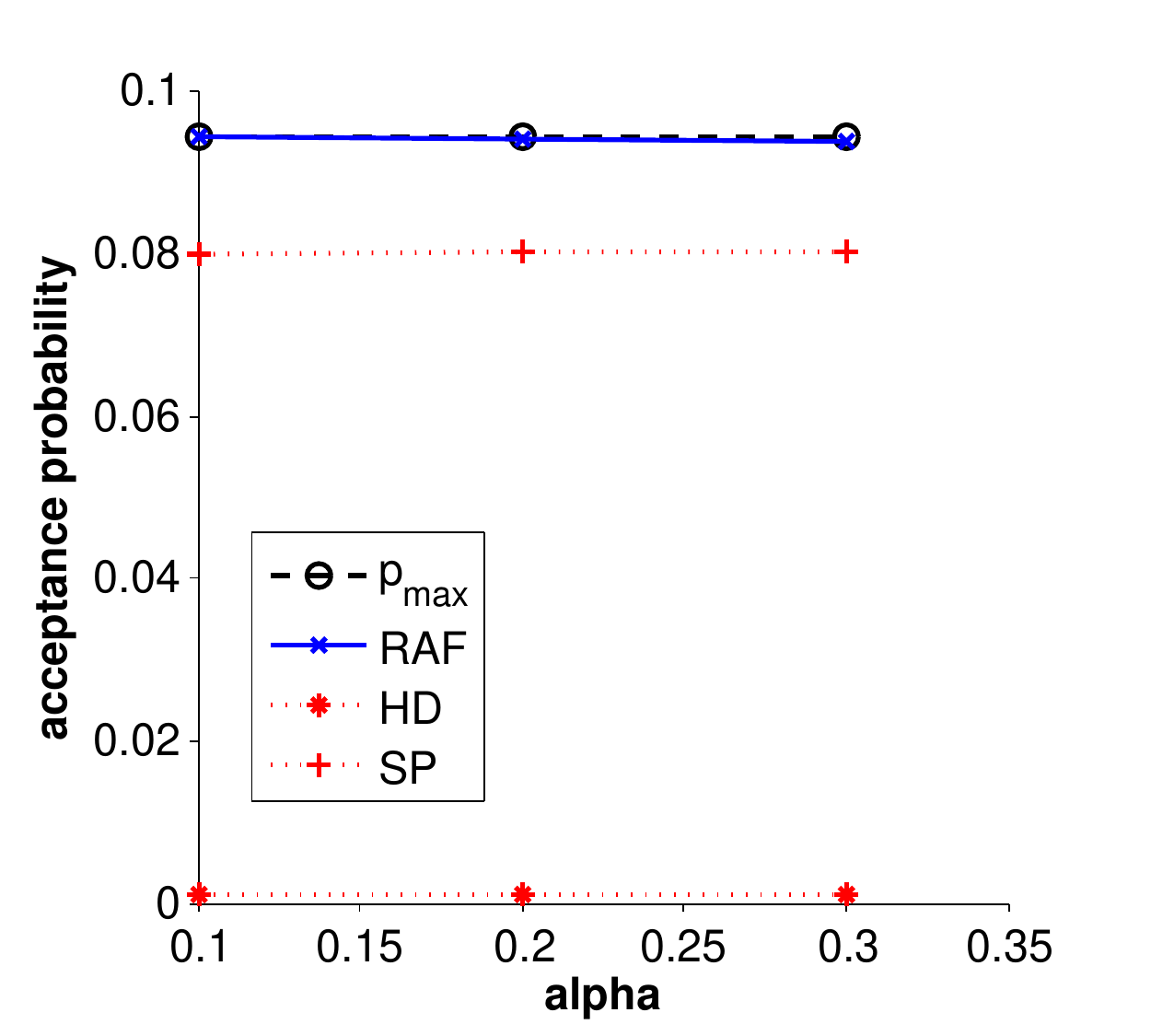}} 
\subfloat[HepTh]{\label{fig: hepth_exp_1}\includegraphics[width=0.24\textwidth]{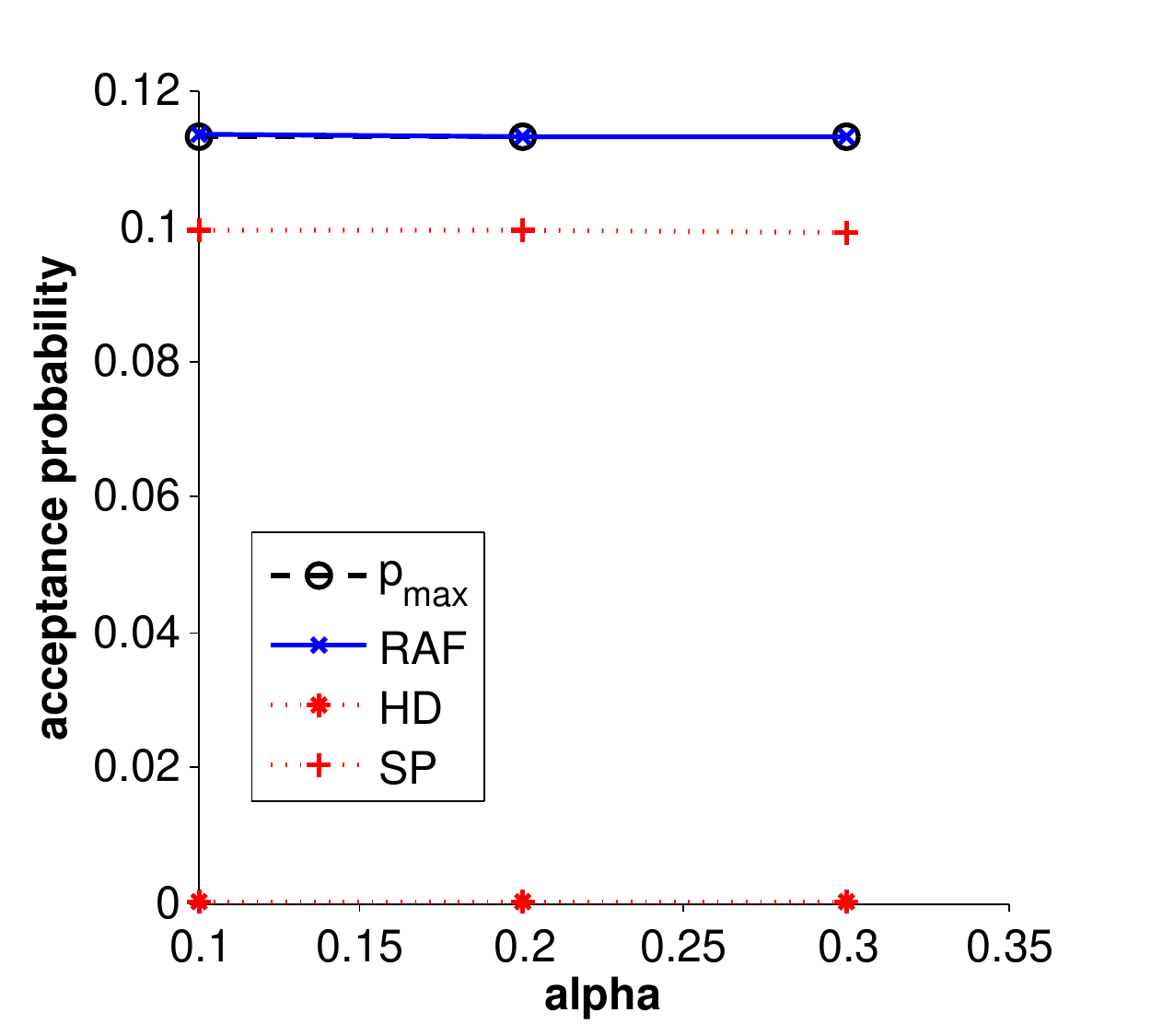}}
\subfloat[Youtube]{\label{fig: youtube_exp_1}\includegraphics[width=0.24\textwidth]{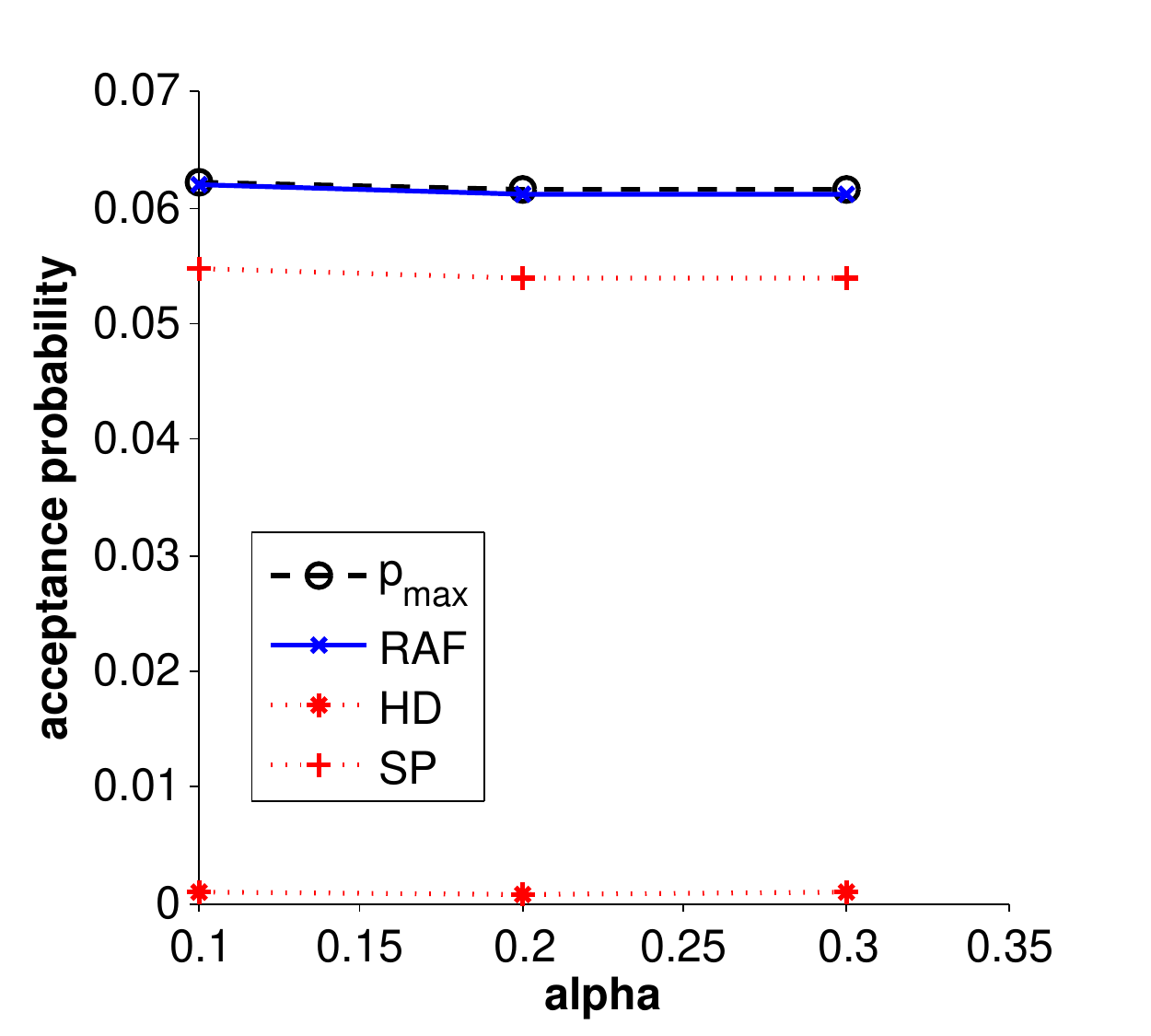}}
\caption{Basic experiment}
\vspace{3mm}
\label{fig: exp}
\end{figure*}

\begin{figure*}[t]
\centering
\subfloat[Wiki]{\label{fig: wiki_hd}\includegraphics[width=0.24\textwidth]{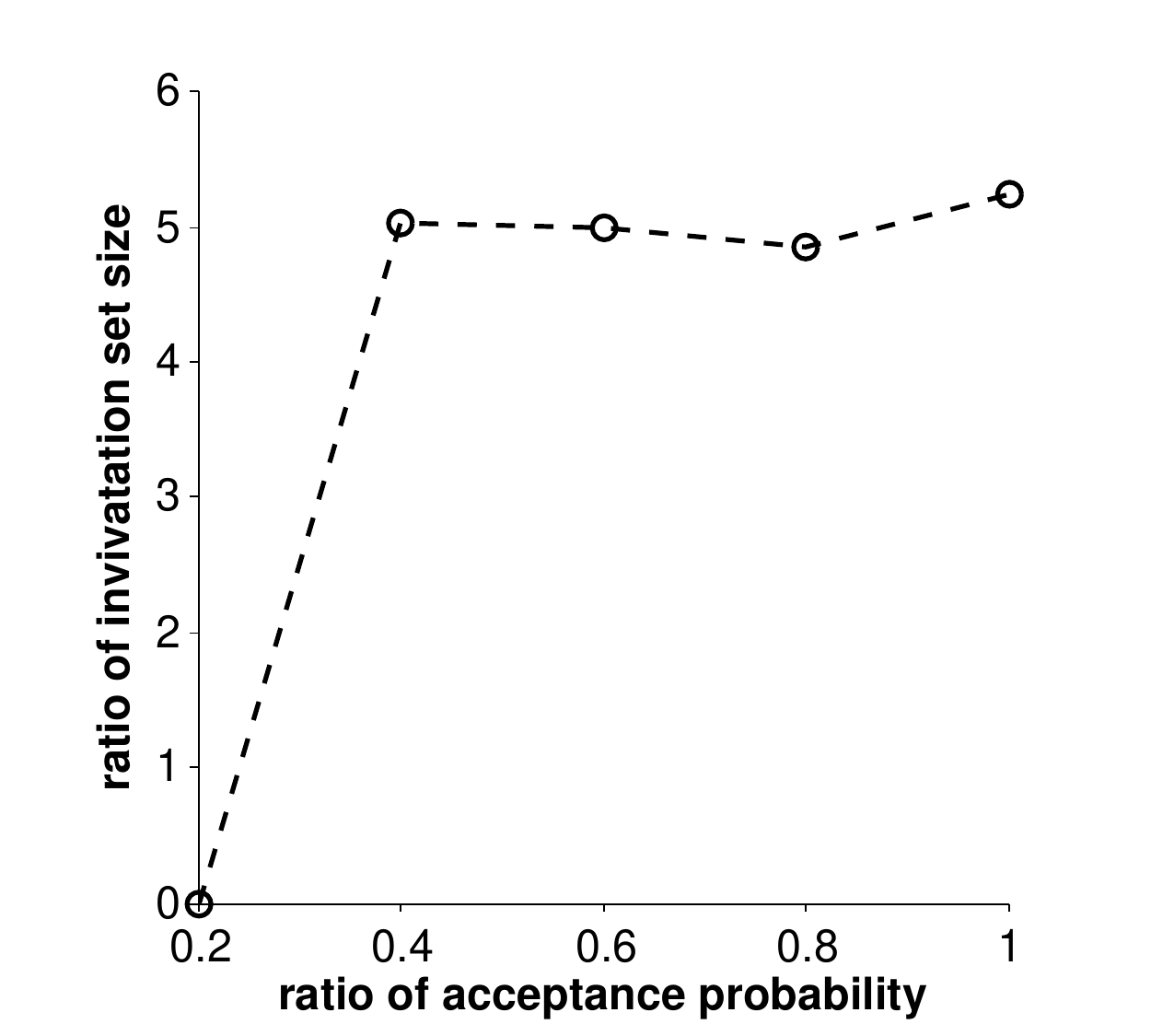}} 
\subfloat[HepPh]{\label{fig: hepph_hd}\includegraphics[width=0.24\textwidth]{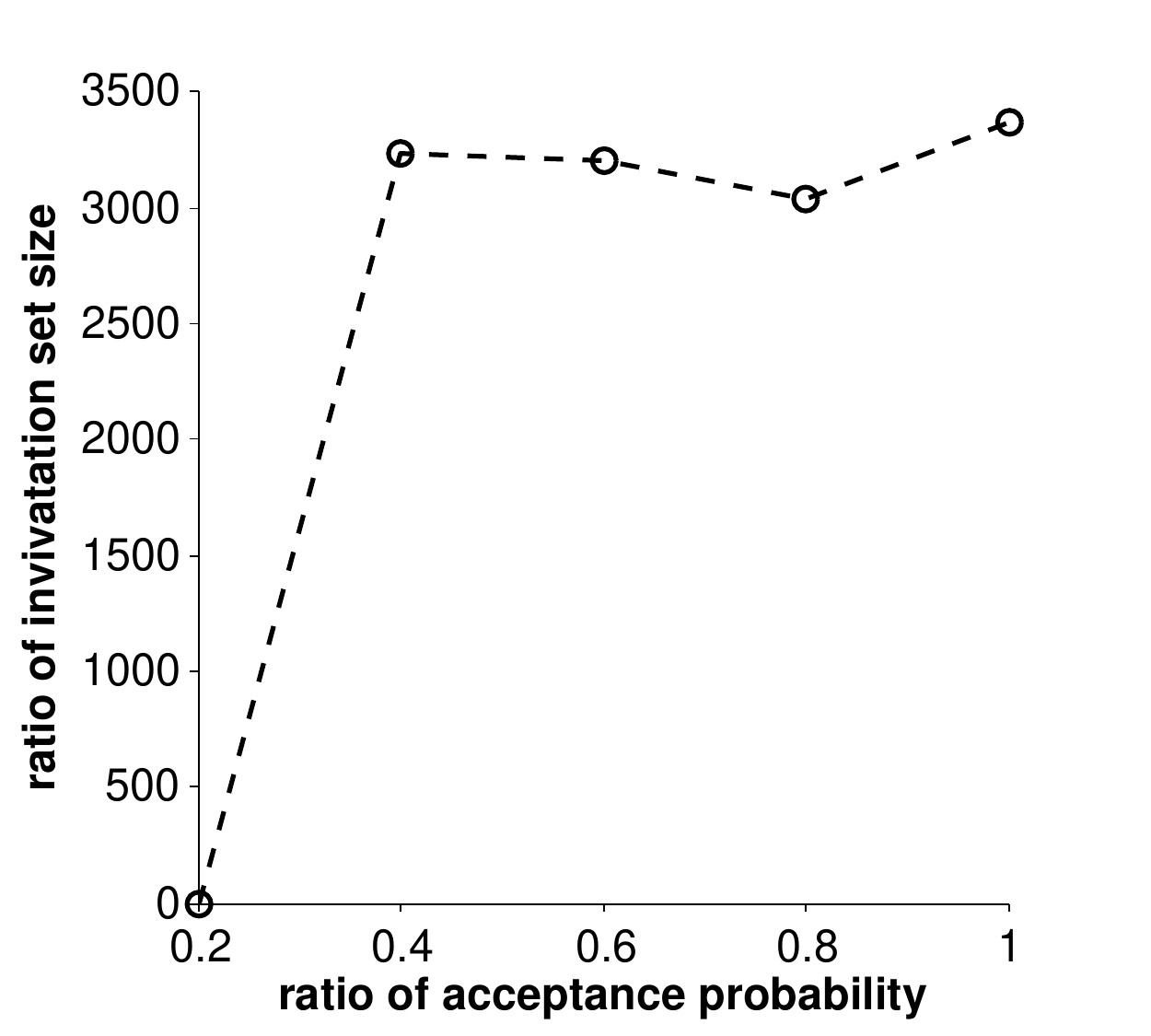}} 
\subfloat[HepTh]{\label{fig: hepth_hd}\includegraphics[width=0.24\textwidth]{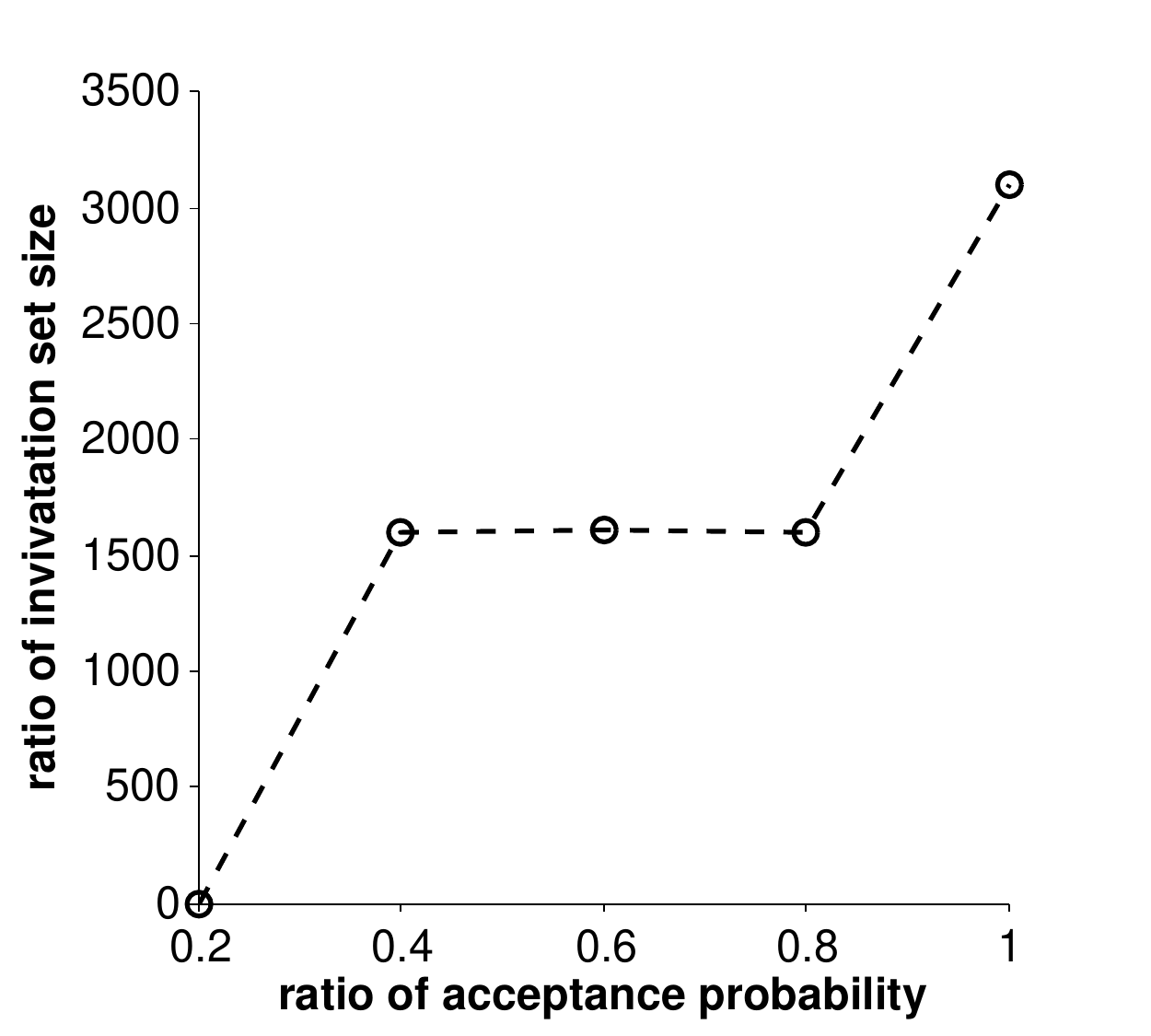}}
\subfloat[Youtube]{\label{fig: youtube_hd}\includegraphics[width=0.24\textwidth]{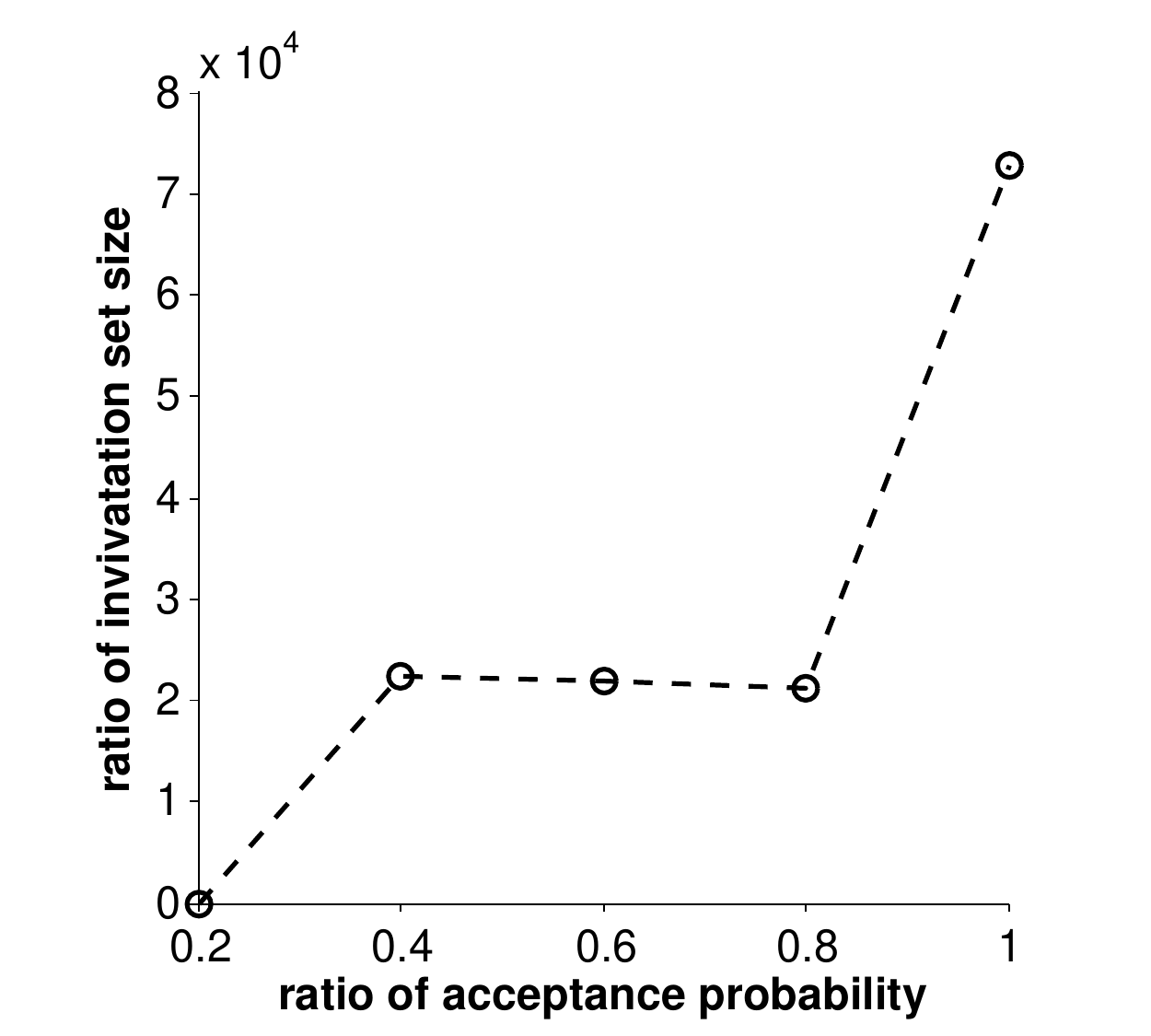}}
\caption{Comparing with HighDegree}
\label{fig: exp_hd}
\end{figure*}

\subsection{Basic Experiment}
\label{subsec: one}
\textbf{Setting.}  We set that $\epsilon=0.01$ and set $N$ as $100,000$ to make the success probability of the algorithm larger than $99.9\%$. For each pair of $s$ and $t$, we first run the RAF algorithm and obtain an invitation set $I_{RAF}$, and then run HD and SP to obtain two invitations with the same size as $I_{{RAF}}$. In particular, until the invitation set reaches the given size, SP fills the invitation set by adding the nodes on the shortest paths from $s$ to $t$, and HD selects the nodes with the highest degree. If more invited nodes are needed, SP will select the next shortest path disjoint from those have been selected.

\textbf{Observations.} The results are shown in Fig. \ref{fig: exp}. The main observation is that even with a very small $\alpha$ the RAF algorithm will produce an invitation set which is able to result in an acceptance probability close to $p_{max}$.\footnote{Note that here $f(I_{RAF})$ is strictly less than $p_{max}$ but the diffidence between them is extremely small.} In addition, when the size of the invitation set is fixed, RAF consistently outperforms HD and SP. On Wiki, as shown in Fig. \ref{fig: wiki_exp_1}, the average acceptance probability resulted by RAF is 0.047, whereas this number is 0.031 under SP or HD. On the other three datasets, SP performs slightly worse than RAF does, while HD cannot produce an effective invitation set. Recall that the friending process can succeed only if the invitation set can connect $s$ and $t$. Thus, SP can at least maintain the connectivity between $s$ and $t$, while HD can hardly do the same on large datasets.

\subsection{Comparing with HD}
\label{subsec: hd}

\textbf{Setting.} 
Following the setting in Sec. \ref{subsec: one}, for each pair of $s$ and $t$, we again first run RAF to obtain $I_{RAF}$. And then we run $HD$ and continuously increase the size of invitation set until the resulted acceptance probability is equal to $f(I_{{RAF}})$. We aim at comparing the size of the invitation sets required by different algorithms to reach the same friending probability. 

\textbf{Observations.} The results are shown in Fig. \ref{fig: exp_hd}, where the $x$-axis denotes the ratio $f(I_{{HD}})/f(I_{{RAF}})$ and the $y$-axis denotes the ratio $|I_{{HD}}|/|I_{{RAF}}|$. We divide the ratio  $f(I_{{HD}})/f(I_{{RAF}})$ into five intervals and compute the average $|I_{{HD}}|/|I_{\text{RAF}}|$ among all the results falling in the same interval. For example, the point $(0.4,5)$ in Fig. \ref{fig: wiki_hd} shows that when the ratio $f(I_{{HD}})/f(I_{{RAF}})$ is around $0.4$, the average of $|I_{{HD}}|/|I_{{RAF}}|$ is close to $4$. 

According to Fig. \ref{fig: exp_hd}, on Wiki, HD requires five times more invited nodes in order to achieve the same acceptance probability resulted by RAF. On HepPh and HepTh,  $|I_{{HD}}|/|I_{{RAF}}|$ is around 3,000 when $f(I_{{HD}})/f(I_{{RAF}})$ is closed to 1. On Youtube, the superiority of RAF becomes more significant under this measure.

The results of different datasets also exhibit different patterns. On Wiki and HepPh, a breakpoint occurs at $x=0.4$ showing that not many new nodes are needed to make the ratio $f(I_{{HD}})/f(I_{{RAF}})$ increases from $0.4$ to $1$. The patterns resulted by HepTh and Youtube are very similar to each other, except that Youtube has a larger scale of the $y$-axis. 

Let us briefly discuss that when a breakpoint may occur. For a particular pair of $s$ and $t$, the breakpoints may occur when there are few paths from $s$ and $t$ and the paths are almost disjoint. Suppose there are only two disjoint paths from $s$ to $t$. After the first path is included in the invitation set, the friending probability cannot increase when more nodes are invited unless the whole second path is included, which results in a sudden increase of the curve.

\begin{figure*}[t]
\centering
\subfloat[Wiki]{\label{fig: wiki_sp}\includegraphics[width=0.23\textwidth]{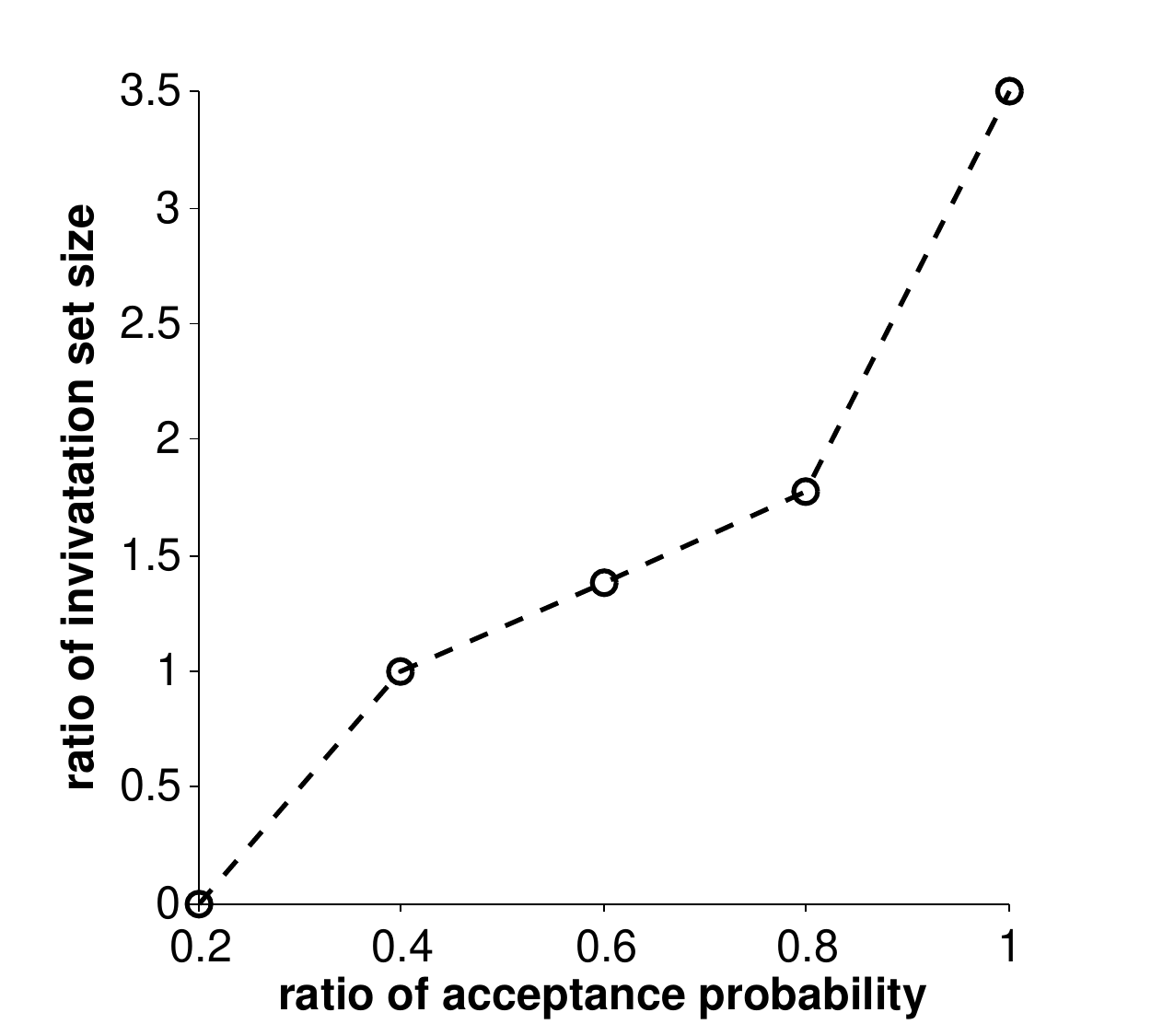}} 
\subfloat[HepPh]{\label{fig: hepph_sp}\includegraphics[width=0.23\textwidth]{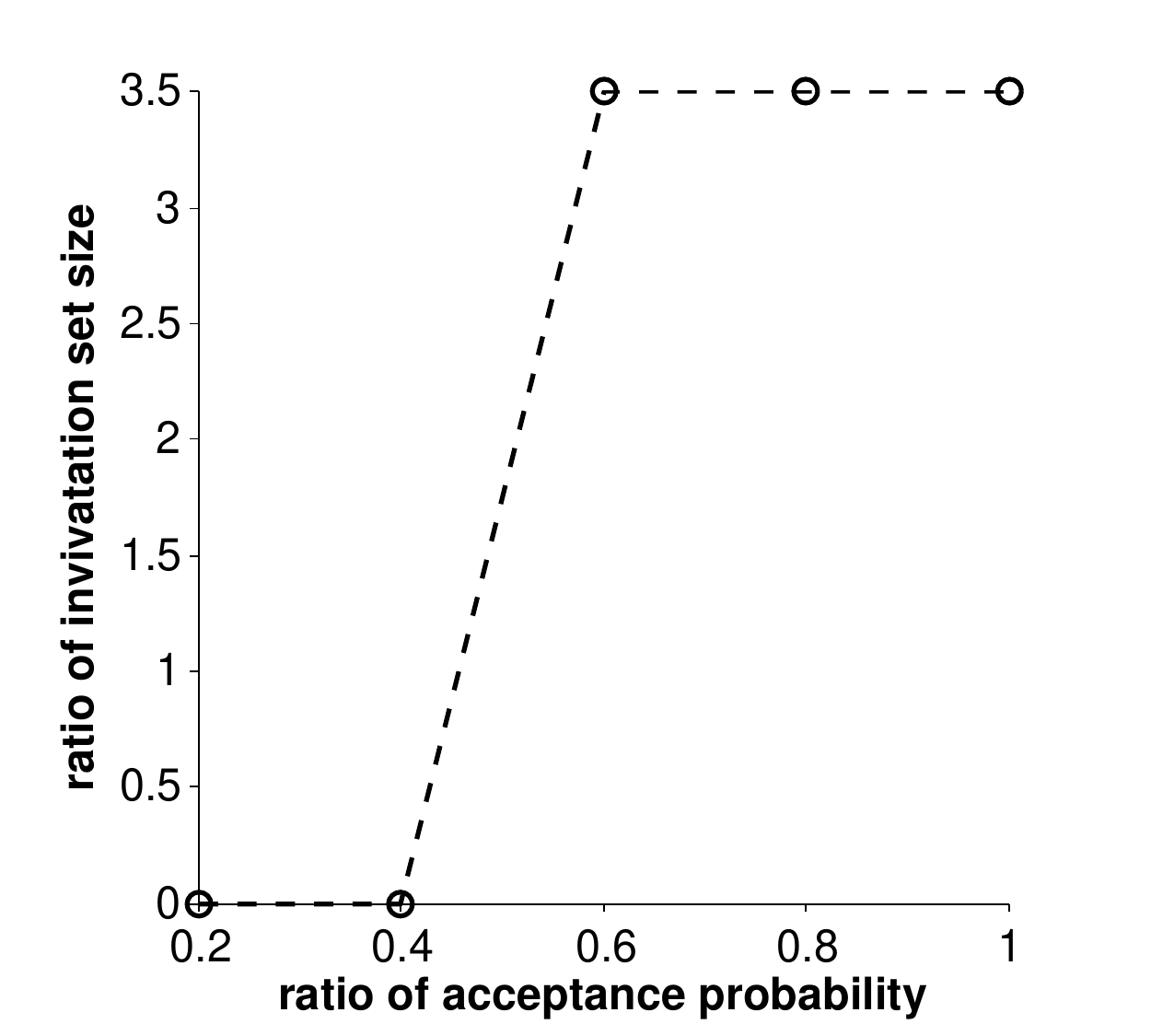}} 
\subfloat[HepTh]{\label{fig: hepth_sp}\includegraphics[width=0.23\textwidth]{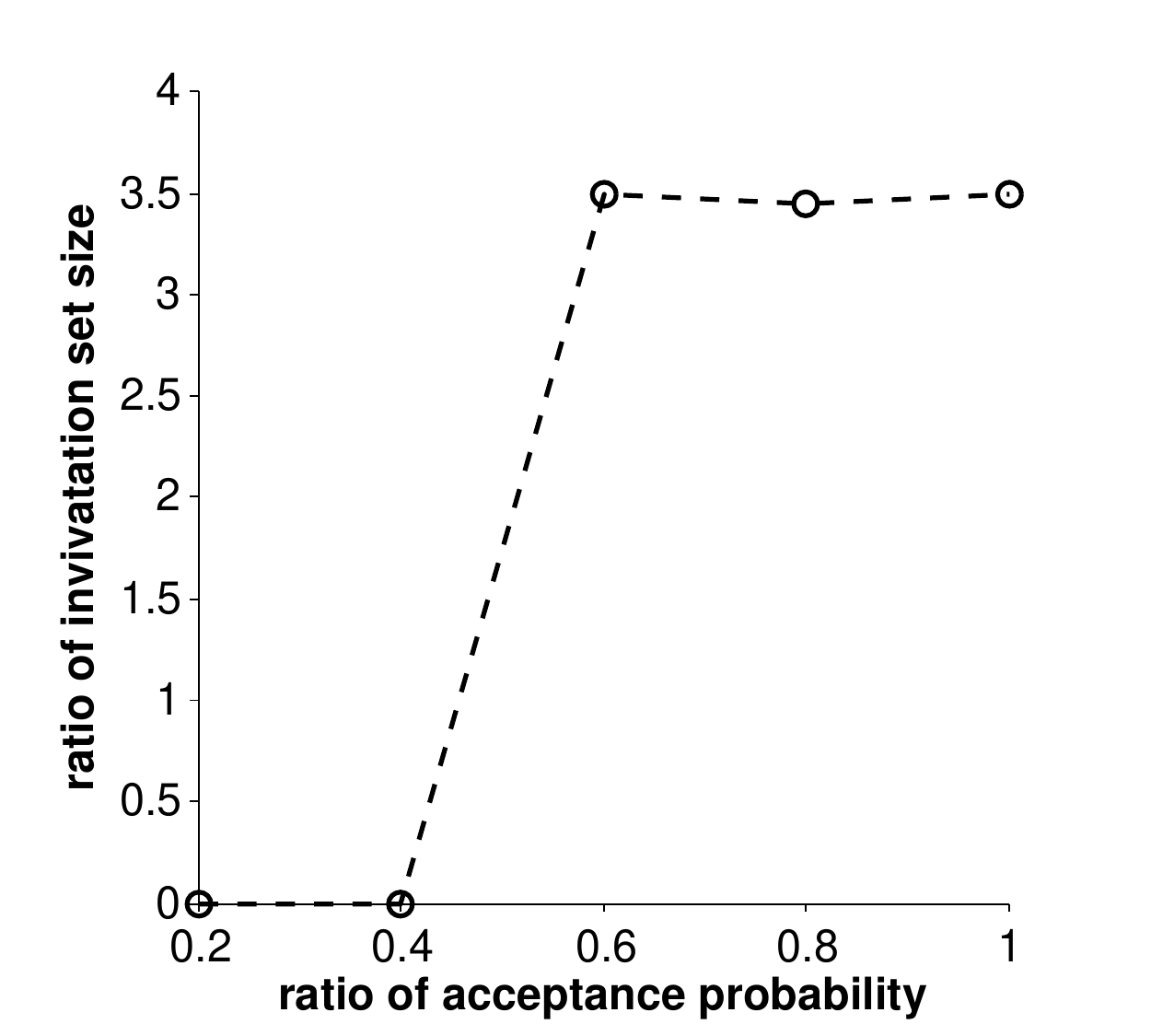}}
\subfloat[Youtube]{\label{fig: youtube_sp}\includegraphics[width=0.23\textwidth]{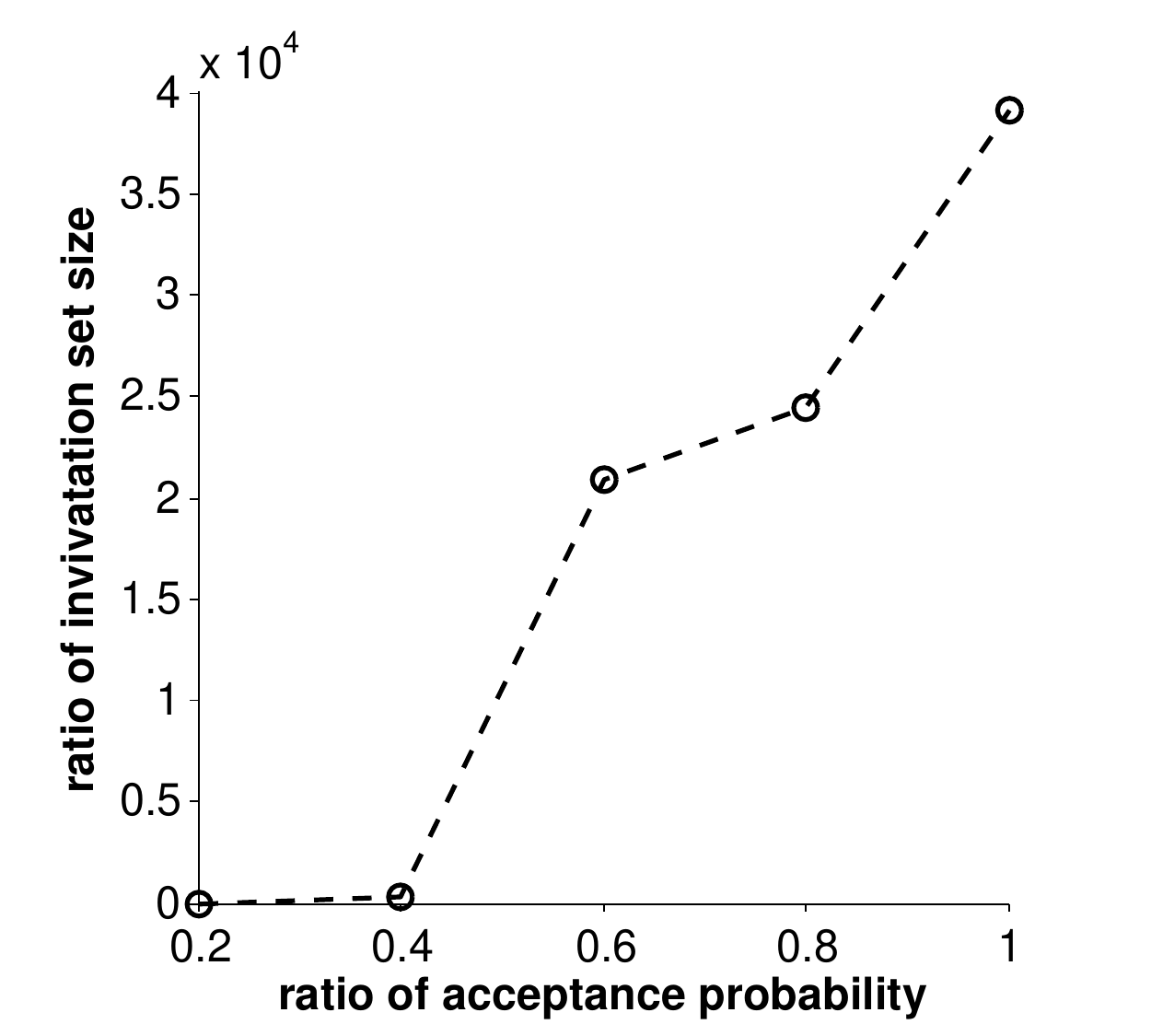}}
\caption{Comparing with ShortestPath}
\label{fig: exp_sp}
\end{figure*}

\begin{table*}[t]
\renewcommand{\arraystretch}{1.5}
\centering
{\begin{tabular}{| C{2.7cm} || C{2.7cm}| C{2.7cm}| C{2.7cm}| C{2.7cm} |}
\hline
\textbf{} 	& \textbf{Wiki} 		& \textbf{HepTh }& \textbf{HepPh} &  \textbf{Youtube} \\   
\hline
\hline
\textbf{Avg. $|V_{max}|$} 	& 130.80		& 165.61		& 915.17 &  6472.21\\
\hline
\textbf{Avg. $|I_{RAF}|$}	& 37.06	& 52.78	& 513.38 &  2126.56 \\
\hline
\textbf{Avg. $|V_{max}|/|I_{RAF}|$}	& 3.45		& 3.89	& 2.63  &  32.77 \\
\hline
\end{tabular}}
\caption{Comparing with $V_{max}$}
\label{table: exp_3}
\vspace{-3mm}
\end{table*}

\subsection{Comparing with SP} 
\label{subsec: sp}

\textbf{Setting.} 
The setting here is similar to that in Sec. \ref{subsec: hd}, except that now we compare RAF with SP.

\textbf{Observations.} The results of this part are shown in Fig. \ref{fig: exp_sp}. On all the three datasets that are relatively small, the number of invited nodes required by SP is less than four times more than $|I_{RAF}|$ in order to achieve $f(I_{RAF})$, indicating that SP is not as good as RAF but still not a very poor heuristic method. However, on Youtube, it requires up to 8,000 times more invited nodes than RAF does to achieve the acceptance probability of $f(I_{RAF})$. Such an observation may suggest that on large graphs a single path is not that relevant for achieving a high acceptance probability due to the fact that a single path can be very long on a large graph and thus the acceptance probability along any single path is not high. Therefore, the overlap between these paths become essential, but SP cannot take account of the dependence between paths. From this perspective, the results herein demonstrate that RAF can better handle large graphs with complex structures of the paths.

\subsection{Comparing with $V_{max}$} 
\label{subsec: exp_3}
\textbf{Setting.} As noted in Lemma \ref{lemma: mini_p_max}, $V_{max}$ is the minimum invitation set that gives $p_{max}$. On the other hand, according to Sec. \ref{subsec: one}, RAF can produce a solution $I_{RAF}$ resulting in an acceptance probability close to $p_{max}$. Therefore, if $|V_{max}|$ is close to $|I_{RAF}|$, the algorithm is not interesting as we can simply take $V_{max}$ as the solution which can be easily computed. In order to figure out this issue, we recorded $V_{max}$ and compared it with the solution of RAF when $\alpha=0.1$. Note that RAF has already been able to produce a good solution when $\alpha$ is equal to $0.1$. For each dataset, we report the average among all tested pairs. 

\textbf{Observations.} The results are listed in Table \ref{table: exp_3}. On Wiki and HepTh, $|V_{max}|$ is at least three times as $|I_{RAF}|$. On Youtube, the average of $|V_{max}|/|I_{RAF}|$ is more than 30. In particular, $V_{max}$ in average requires 6472 invited nodes to achieve $p_{max}$ while $I_{RAF}$ in average needs 2126 invited nodes to produce an acceptance probability close to $p_{max}$. In short, RAF is indeed an effective algorithm in terms of acceptance probability, and it is also efficient concerning the input-output ratio $f(I)/|I|$.

\begin{figure}[!t]
\begin{center}
\includegraphics[width=0.35\textwidth]{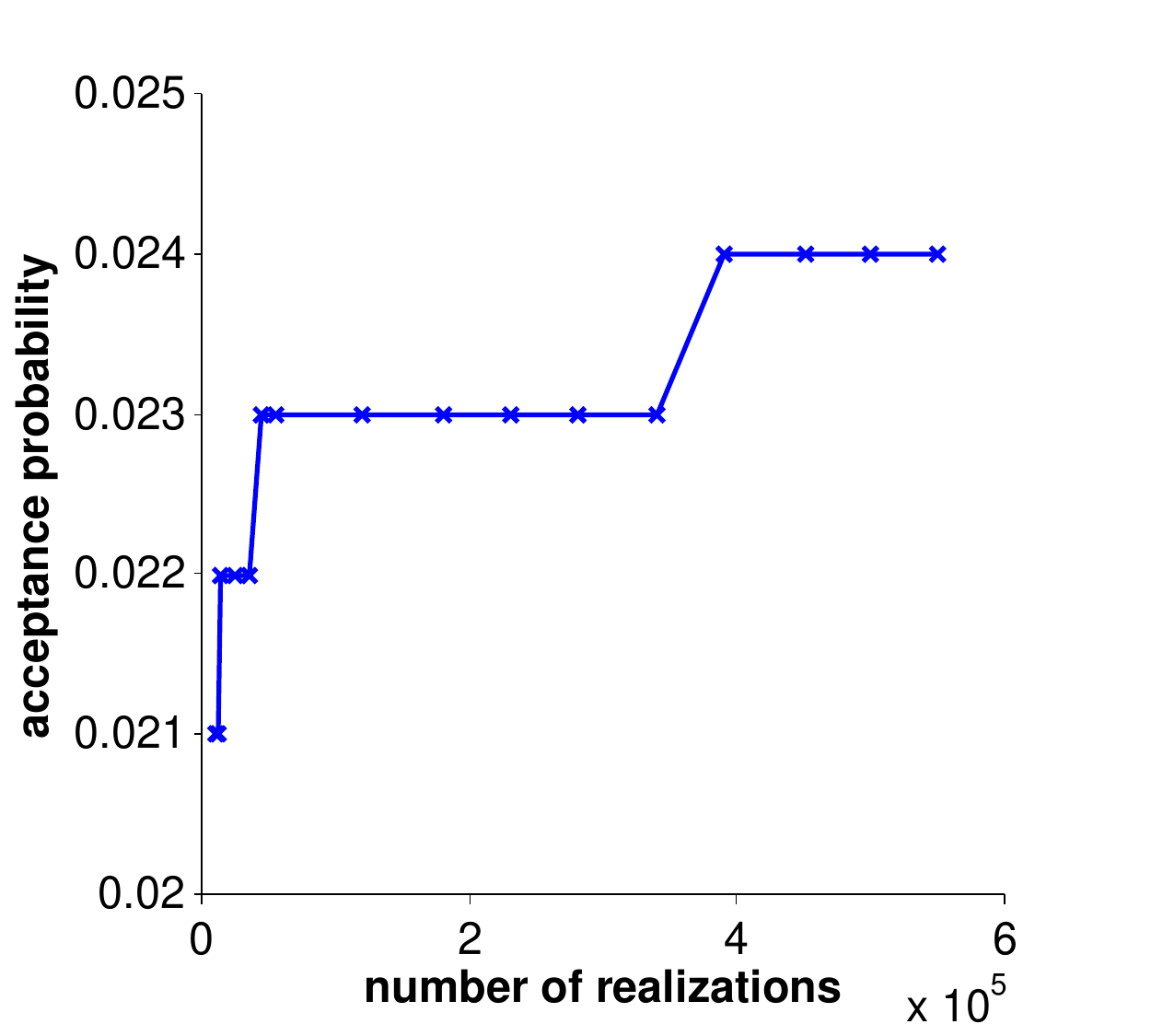} 
\end{center} 
\caption{An illustrative example.}
\label{fig: exp4}
\vspace{-6mm}
\end{figure}

\subsection{Further Discussion}
\label{subsec: further}
According to the analysis in Sec. \ref{sec: algorithm}, the performance of RAF depends on the number of random realizations generated in line 2 in Alg. \ref{alg: algorithm}. In this paper, we give a lower bound which guarantees that the performance can be bounded. However, is this lower bound tight in practice or we have overly generated more realizations than it is needed to reach the maximal performance? We briefly examine this issue by testing different $l$ used in Alg. \ref{alg: algorithm}. One illustrative example collected from Wiki is given in Fig. \ref{fig: exp4}. For this pair of $s$ and $t$, the solution produced by RAF utilized 550,567 invited nodes and the resulted acceptance probability is 0.024. Now we fix $\beta$ and reduce the number the used realizations, and test the acceptance probability resulted by the solution given by Alg. \ref{alg: algorithm}. As shown in the graph, 400,000 realizations would be sufficient to achieve $0.024$. Thus, the running time of RAF can be improved by reducing the number of realizations in practice without hurting the performance. In addition, when only 10,000 realizations are used, the resulted acceptance probability is 0.021 which is not very far from 0.024, which suggests that, in terms of the input-output ratio, a more efficient solution can be found by further reducing the number of realizations used in RAF. We note that these observations apply to many tested pairs and the illustrative example in Fig. \ref{fig: exp4} is not an outlier.

\section{Related Work} 
\label{sec: related}
The existing works primarily focus on the friend recommendation problem. In \cite{chen2009make}, the authors designed several people recommendation algorithms to help users find known offline contacts and discover new friends on Beehive\footnote{Beehive is an enterprise social networking site within IBM.}. A friend recommendation framework to improve recommending quality by characterizing user interest in several dimensions was later studied in \cite{xie2010potential}. The work \cite{agarwal2013collaborative} studied the friend recommendation problem from the view of interaction intensity by using the technique of collaborative filtering. The authors of \cite{hannon2010recommending} also utilized collaborative filtering and considered the problem of recommending twitter users to follow. In \cite{chu2013friend}, the authors proposed another friend recommendation approach with the consideration of real-life location and dwell time. Different from the above works, our paper considers the active friending problem where we aim at building a friendship between an initiator and a specified target user. 

The active friending problem was proposed in \cite{yang2013maximizing} where the friending process was modeled based on the cascade model. Based on an approximate IC model, called MIA \cite{chen2010scalable}, the authors in \cite{yang2013maximizing} studied a simplified problem and designed three algorithms: Range-based Greedy (RG) algorithm, Selective Invitation with Tree Aggregation (SITA) algorithm, and Selective Invitation with Tree and In-Node Aggregation (SITINA) algorithm. Following this line, the authors in \cite{chen2014could} studied the same problem but considered the case when the network forms a DAG. Recently, the authors in \cite{yuan2017active} considered the maximum active friending problem under the linear threshold model and provided an algorithm with a data-dependent approximation ratio by using the super-differential. The linear threshold model has not been widely considered for the active friending problem, though this model has drawn much attention in social network analysis (\cite{chen2010scalable}, \cite{goyal2011simpath}, \cite{he2012influence}, \cite{pathak2010generalized}, \cite{lu2012complexity}). The threshold model has the advantage in modeling the influence of mutual friends on the friending process, which is the main reason that we adopt this model. In addition, it is worthy to note that the active friending problem under the linear threshold model is markedly different from that under the independent cascade model. This problem is neither submodular nor supermodular under the independent cascade model \cite{yang2013maximizing}, while it becomes supermodular under the linear threshold model as shown in \cite{yuan2017active}.

\section{Conclusion}
\label{sec: conc}
In this paper, we study the active friending problem in online social networks. We consider the linear threshold model and design the RAF algorithm with provable performance guarantees. The performance of the proposed algorithm is supported by encouraging experimental. 

One promising future work is to customize the active friending problem for specific social networks, e.g., Facebook, Twitter and LinkedIn. Based on the friending model tailored to different social networks, solutions to active friending are expected to have higher practicability and effectiveness. Second, the approximation hardness of the active friending problem under the linear threshold model is still open. Finally, as noted in Sec. \ref{subsec: further}, it is interesting to further investigate how to reduce the running time of RAF without sacrificing the performance bound. 

\section*{Acknowledgment}
This work is supported in part by the start-up grant from the University of Delaware and the US National Science Foundation under Award \#1747818.

\appendix
\section{Proofs}
\label{appendix: a}
\subsection{Proof of Lemma \ref{lemma: process}}
Note that $f(I)$ is the probability that $t \in C_{\infty}(I)$ and $\E[f(\g,I)]$ is the probability that $t \in H_{\infty}(\g,I)$. It suffices to show that $H_{\infty}(\g,I)$ and $C_{\infty}(I)$ have the same distribution. Because the thresholds are independent from Process 1, the threshold can be generated during the process of generating $C_{\infty}(I)$. Similarly, we can generate the realization along with Process 2. Due to the update rules Eqs. (2) and (3) and the fact that $C_0=H_{0}$, it further suffices to prove that $\Phi(C_i(I))$ and $\Psi(H_{i}(\g,I))$ have the same distribution under the condition that $C_j(I)=H_{j}(g,I)$ for $j<i$. Let us first consider $\Phi(C_i(I))$. For each $u \notin C_i(I)$ and $u \in I$, according to the distribution of $\theta_u$ and Eq. (1), the probability that $u \in \Phi(C_i(I))$ is 
\begin{eqnarray*}
&&\Pr[\theta_u<\sum_{v \in C_i(I)} w_{(v,u)} | \theta_u>\sum_{v \in C_{i-1}(I)} w_{(v,u)}]\\ 
&=&\frac{\sum_{v \in C_i(I)\setminus C_{i-1}(I)} w_{(v,u)}}{1-\sum_{v \in C_{i-1}(I)} w_{(v,u)}}
\end{eqnarray*}
Second, let us consider $\Psi(H_{i}(g,I))$. For each $u \notin H_i(g, I)$ and $u \in I$, according to Def. 1 and Eq. (4), the probability that $u \in \Psi(H_{i}(g,I))$ is $\Pr[g(u) \in H_{i}(\g,I) |g(u) \notin H_{i-1}(g,I)]$ which is
\begin{equation*}
\frac{\sum_{v \in H_{i}(g,I)\setminus H_{i-1}(g,I)}w(v,u)}{1-\sum_{v \in H_{i-1}(g,I)}w(v,u)}.
\end{equation*} 
By the inductive hypothesis, we have 
\begin{equation*}
\frac{\sum_{v \in C_i(I)\setminus C_{i-1}(I)} w_{(v,u)}}{1-\sum_{v \in C_{i-1}(I)} w_{(v,u)}}=\frac{\sum_{v \in H_{i}(g,I)\setminus H_{i-1}(g,I)}w(v,u)}{1-\sum_{v \in H_{i-1}(g,I)}w(v,u)},
\end{equation*}
which completes the proof.

\pagebreak
\bibliography{amobib}
\bibliographystyle{IEEEtran}

\end{document}